\documentclass[11pt,3p]{elsarticle}
\makeatletter
\def\ps@pprintTitle{\let\@oddhead\@empty
  \let\@evenhead\@empty
  \def\@oddfoot{\reset@font\hfil\thepage\hfil}
  \let\@evenfoot\@oddfoot
}
\makeatother

\usepackage{amsmath}
\usepackage{amsfonts}
\usepackage{amssymb}
\usepackage{bbm}
\usepackage{amsfonts}
\usepackage{amsthm}
\newtheorem{theorem}{Theorem}[section]

\usepackage{amscd}
\usepackage{mathtools}
\usepackage{mathrsfs}
\usepackage{subcaption}
\usepackage{commath}
\usepackage{lmodern}
\usepackage{color}
\usepackage{a4wide}
\usepackage{bm} 
\usepackage{physics}
\usepackage{enumitem}  
\usepackage{bbm}
\usepackage{booktabs}
\usepackage{hyperref}
\usepackage{float}

\usepackage{graphicx}
\hypersetup{ colorlinks=true, urlcolor  = blue, linkcolor = blue,
citecolor = blue1,}
\usepackage{cleveref}
\usepackage{comment}
\usepackage{orcidlink}
\usepackage[english]{babel}
\usepackage[IL2]{fontenc}
\usepackage[utf8]{inputenc}

\usepackage{tabularx}
\usepackage{array}

\usepackage{booktabs}
\usepackage{multirow}
\usepackage{graphicx}

\usepackage[linesnumbered,ruled,vlined]{algorithm2e}
\usepackage[noend]{algpseudocode}

\usepackage{amsmath,amssymb,mathtools,bm}
\usepackage{enumitem}
\usepackage{physics}

\SetKwInput{KwInput}{Input}                
\SetKwInput{KwOutput}{Output}              

\SetCommentSty{mycommfont}

\newtheorem{lemma}[theorem]{Lemma}

\theoremstyle{definition}
\newtheorem{definition}[theorem]
{Definition}
\theoremstyle{remark}
\theoremstyle{remark}
\newtheorem{remark}{\upshape\bfseries Remark}

\usepackage[dvipsnames]{xcolor}

\usepackage{ulem}

\begin{document}

\begin{frontmatter}
\title{Increased-Order Generalized Synchronization in Memristive Neurons: Analytical and Data-Driven Approaches}
\author[1]{Dennis Mustafi\'{c} \orcidlink{0009-0003-0772-612X}}
\ead{dennis.mustafic@fau.de}
\author[1,2]{Marius E. Yamakou  \orcidlink{0000-0002-2809-1739}}
\ead{marius.yamakou@fau.de}
\address[1]{Department of Data Science, Friedrich-Alexander-Universit\"at Erlangen-N\"urnberg, N\"urnberger Str. 74, 91052 Erlangen, Germany}
\address[2]{Department of Mathematics, Friedrich-Alexander-Universit\"at Erlangen-N\"urnberg, Cauer Str. 11, 91058 Erlangen, Germany}


\begin{abstract}
We investigate increased-order generalized synchronization between four- and five-dimensional memristive Hindmarsh--Rose neurons, focusing on the selection and stabilization of the synchronization map. An analytical construction combines a prescribed nonlinear map with adaptive control, retaining the memristive switching contribution explicitly in the error dynamics. Lyapunov analysis establishes asymptotic synchronization under uniform strict dissipativity and yields a switching-independent decrease estimate outside an explicit error neighborhood. Numerical simulations exhibit error decay beyond the sufficient convergence conditions. A complementary data-driven construction uses next generation reservoir computing to learn predictors from uncoupled full-state trajectories and an affine synchronization map through dynamical consistency. The resulting sampled-data controller achieves small residual tracking errors without access to the governing equations. Relative to a reference map within the data-driven framework, the learned map reduces median control effort approximately eightfold while improving tracking accuracy. These results connect synchronization-map selection with conditional convergence guarantees and reduced control requirements.
\end{abstract}

\begin{keyword}
Increased-order generalized synchronization \sep
Memristive Hindmarsh--Rose neurons \sep
Adaptive control \sep
Lyapunov stability \sep
Next generation reservoir computing \sep
Data-driven control
\end{keyword}
\end{frontmatter}

\section{Introduction}
\label{sec:introduction}

Synchronization is a fundamental phenomenon in nonlinear dynamics whereby interacting systems develop coordinated behavior. It establishes persistent relationships between phases, frequencies, or state trajectories, even when individual dynamics remain irregular. The central questions concern these relationships' existence, stability, and the coupling mechanisms that suppress deviations. These questions are particularly relevant when systems differ in parameters, timescales, or dimensions. Neural coordination has functional and pathological consequences: organized activity supports normal brain function \cite{Buzsaki.2006}, whereas excessive synchronization is associated with epileptic seizures \cite{Lehnertz.2009}. Understanding synchronization therefore requires identifying its dynamical conditions and how coupling or control modifies them.

Chaotic synchronization shows that unstable trajectories can maintain a stable relationship. Pecora and Carroll demonstrated this in a drive--response configuration, where an autonomous chaotic system unidirectionally forces a response \cite{Pecora.1990}. In complete synchronization, state trajectories asymptotically coincide, requiring compatible state spaces of equal dimension \cite{yamakou2016ratcheting}. Generalized synchronization instead describes a functional dependence of the response state on the drive state \cite{Rulkov.1995,Kocarev.1996}. Synchronization then corresponds to decaying deviations from a relation in the joint phase space. This viewpoint separates the complexity of synchronized motion from the relation's stability, providing a framework for nonidentical systems.

Unequal phase-space dimensions introduce a structural issue. Reduced-order synchronization concerns a response with fewer variables than the drive \cite{Femat.2002,Kobiolka.2025}, whereas increased-order generalized synchronization (IOGS) concerns a response with additional degrees of freedom \cite{Ogunjo.2013,Ojo.2015,Miao.2009}. IOGS must account for response variables without drive counterparts. Here, the relationship is designed: a target map is selected, and coupling drives the response toward its graph. The map determines the target motion and its mismatch with the intrinsic response dynamics. Its selection therefore affects the error equations, convergence conditions, and required forcing.

Memristive neuron models combine dimensional mismatch with memory, multiple timescales, and parameter uncertainty. The Hindmarsh--Rose (HR) model describes spiking and bursting by coupling membrane potential, fast recovery, and slow adaptation \cite{HindmarshRose.1984}. Memristive extensions add an internal state mediating activity-dependent feedback and modifying neuronal dynamics and synchronization \cite{Zhang.2024,Yamakou.2020}. Additional adaptation processes increase the dimension and introduce dynamical pathways \cite{Kobiolka.2025}. We consider a four-dimensional memristive HR drive and a five-dimensional response with an additional slow adaptation variable. Both include a switchable memristive element. This construction allows us to study coordination across unequal dimensions while retaining the memristive nonlinearity in the stability analysis.

Previous studies established increased-order synchronization using prescribed linear maps and active control with known parameters \cite{Ogunjo.2013,Ojo.2015}. Adaptive extensions address uncertain parameters, controlling additional response coordinates to reproduce a drive variable \cite{Al-sawalha.2011} or follow a prescribed function of the drive state \cite{Miao.2009}. For identical memristive HR neurons, Lyapunov analysis established local asymptotic stability of complete synchronization under dissipative memristive switching and practical stability with an explicit ultimate bound otherwise \cite{Babaeian.2026}. Reduced-order adaptive synchronization has also been studied for memristive HR neurons with parameter mismatch \cite{Kobiolka.2025}. These results motivate studying increased-order synchronization with a nonlinear target map while retaining switching explicitly in the error dynamics. The problem balances feedback-induced dissipation against retained memristive dynamics while compensating parameter mismatch through adaptation.

A complementary difficulty arises when governing equations are unavailable or poorly known. Reconstructing neuronal dynamics from recordings is challenging \cite{Durstewitz.2023}, and data-driven methods enable prediction and control from observed trajectories \cite{Brunton.2022}. Learned models serve as dynamical replicas and observers in synchronization research. A reservoir computer trained on a chaotic system can synchronize with it through an observed signal \cite{Weng.2019}. Reservoir observers infer unmeasured variables \cite{Lu.2017}, enabling physical-response synchronization with reduced coupling strength \cite{Nazerian.2023}. Reservoir observers of memristive HR neurons with different dimensions have also achieved reduced-order synchronization \cite{Kobiolka.2025}. These approaches demonstrate the usefulness of learned dynamics and raise a further question: how can they inform the target synchronization relationship?

Next generation reservoir computing (NG-RC) connects learned prediction with feedback control. NG-RC models compensate nonlinear dynamics and impose prescribed tracking-error decay in chaotic maps \cite{Kent.2024}, with related schemes implemented on embedded hardware for a physical chaotic circuit \cite{Kent.2024b}. Learning the synchronization map adds a geometric requirement: approximate compatibility with both systems' evolution. A related principle identifies conjugate mappings between Poincar\'{e} maps by enforcing consistency between coordinate transformation and dynamical evolution \cite{Bramburger.2021}. Here, we seek an approximately invariant graph under learned uncoupled dynamics. Advancing the drive and then mapping should approximately agree with mapping first and advancing the response. This criterion provides a self-supervised objective from uncoupled trajectories and measures the dynamical mismatch requiring control compensation. Exact conjugacy is not required.

We develop analytical and data-driven approaches centered on synchronization-map selection. The analytical approach prescribes a nonlinear map targeting all five response components while preserving the memristive coordinate. Feedback controllers and adaptive parameter-update laws retain switching explicitly in the error dynamics. Under the stated boundedness and existence assumptions, case-wise Lyapunov analysis proves asymptotic IOGS and bounded parameter-estimation errors when the memristive contribution is uniformly strictly dissipative. Without this condition, a switching-independent estimate gives an explicit error radius outside which the adaptive Lyapunov function decreases, without asserting ultimate boundedness. Numerical simulations additionally exhibit synchronization-error decay along a trajectory that violates the strict dissipativity hypothesis, distinguishing the observed behavior from the suffcient conditions established analytically.

The data-driven approach trains memoryless NG-RC predictors on uncoupled full-state recordings with fixed response parameters. An affine synchronization map is learned through dynamical consistency, with variance and covariance penalties discouraging collapsed targets. The predictors and map define sampled-data coupling with additive actuation of every response component; governing equations generate and assess the experiments but are unavailable to the learning and control algorithms. The response tracks the learned target with small residual error. Within this framework, the learned map reduces median control effort approximately eightfold relative to the reference correspondence while improving tracking accuracy. These results illustrate complementary roles of map selection: retaining tractable error dynamics for conditional convergence guarantees and promoting dynamical compatibility to reduce forcing. The constructions use different maps and assumptions; the data-driven results provide neither a stability proof nor a direct performance comparison with the analytical controller.

The paper is organized as follows. \Cref{sec:model} presents the neuronal models and uncontrolled dynamics. \Cref{sec:iogs} develops the nonlinear synchronization map, controllers, and adaptive laws, and \cref{sec:main_results} establishes the analytical estimates. \Cref{sec:numerical_results} relates numerical behavior to the theoretical conditions. \Cref{sec:data-driven} presents the learned predictors, synchronization map, and sampled-data control. Finally, \cref{sec:conclusion} summarizes the findings and open directions.

\section{Mathematical Models}
\label{sec:model}

To investigate increased-order generalized synchronization, we consider
two extensions of the Hindmarsh--Rose (HR) model, a paradigmatic
neuron model of established biological relevance: a four-dimensional
memristive HR drive system and a controlled five-dimensional memristive
HR response system with parameter mismatch. The underlying classical
three-dimensional HR model \cite{HindmarshRose.1984} describes the
interaction between the membrane potential, a fast recovery variable,
and a slow adaptation current. Both the drive and response incorporate
a switchable memristive autapse, while the response additionally
includes an adaptation variable associated with slower ionic processes.
This construction allows us to examine increased-order synchronization in the presence
of both unequal phase-space dimensions and uncertain response parameters.

\subsection{Drive and response neuronal systems}
\label{subsec:drive_response}

Let $\mathbf{X}_d=(x,y,z,u)^\top\in\mathbb{R}^4$ denote the
drive state, where $x$ is the membrane potential, $y$ is the fast
recovery variable, $z$ represents the slow adaptation current, and
$u$ is the internal state of the switchable memristor \cite{Zhang.2024}. The drive
dynamics are given by
\begin{equation}
\dot{\mathbf{X}}_d=F_d(\mathbf{X}_d)=
\begin{pmatrix}
F_1\\F_2\\F_3\\F_4
\end{pmatrix}
:=
\begin{pmatrix}
y-ax^3+bx^2+k(h+fu^2)x-z+I\\[1mm]
c-px^2-y\\[1mm]
r[s(x-x_0)-z]\\[1mm]
-\xi u+mG(u)+x
\end{pmatrix},
\label{eq:drive_system}
\end{equation}
where
\begin{equation}
G(u):=|u+1|-|u-1|
=
\begin{cases}
-2, & u<-1,\\
2u, & -1\leq u\leq 1,\\
2, & u>1.
\end{cases}
\label{eq:memristor_nonlinearity}
\end{equation}

The term $k(h+fu^2)x$ describes the memristive autaptic feedback
\cite{Zhang.2024}. Its strength is controlled by $k$,
while $h$ and $f$ determine the state-independent and
state-dependent contributions to the feedback coefficient,
respectively. The memristor state is driven by the membrane
potential and evolves under linear relaxation, $-\xi u$, and
the continuous piecewise-linear switching term $mG(u)$. Thus,
the membrane potential affects the internal memristor state,
which in turn modulates the voltage dynamics, introducing an
activity-dependent feedback mechanism. The parameter $m$ controls
the strength of the switching contribution.

The remaining terms retain their usual HR interpretation
\cite{HindmarshRose.1984}. The coefficients $a$ and $b$ determine
the cubic and quadratic voltage nonlinearities, $c$ and $p$
govern the fast recovery dynamics, and $I$ is the applied current.
The adaptation variable $z$ relaxes towards $s(x-x_0)$ at a rate
determined by $r$ and feeds back into the voltage equation
through $-z$.

The response state is
$\mathbf{X}_r=(x_r,y_r,z_r,u_r,w_r)^\top\in\mathbb{R}^5$.
In addition to the four variables corresponding to those of the
drive, the response contains an additional slow adaptation variable
$w_r$, which phenomenologically represents the influence of slower
ionic processes \cite{Kobiolka.2025,Yamakou.2020}. Its controlled
dynamics are
\begin{equation}
\begin{cases}
\dot{x}_r
=
y_r-ax_r^3+bx_r^2
+\bar{k}(h+fu_r^2)x_r-z_r+I+v_1,\\[1mm]
\dot{y}_r
=
c-px_r^2-y_r-\bar{\sigma}w_r+v_2,\\[1mm]
\dot{z}_r
=
r[\bar{s}(x_r-x_0)-z_r]+v_3,\\[1mm]
\dot{u}_r
=
-\bar{\xi}u_r+mG(u_r)+x_r+v_4,\\[1mm]
\dot{w}_r
=
\bar{\eta}[\gamma(y_r-y_0)-\rho w_r]+v_5.
\end{cases}
\label{eq:response_system}
\end{equation}
Here, $\mathbf{v}=(v_1,\ldots,v_5)^\top$ is the control vector
to be designed in the subsequent synchronization analysis. 
The additional variable $w_r$ phenomenologically represents calcium
exchange between intracellular stores and the cytoplasm \cite{falcke2000modeling,ngouonkadi2016bifurcations}.
It is driven by the deviation of the recovery variable $y_r$ from
the reference level $y_0$ and feeds back into the recovery equation
through the term $-\bar{\sigma}w_r$. The parameters $\gamma$ and $\rho$ weight
the recovery-dependent forcing and linear relaxation terms,
respectively, while $\bar{\eta}$ scales the intrinsic dynamics
of $w_r$. For $\bar{\eta}(t)>0$ and $\rho>0$, the linear decay
term $-\bar{\eta}(t)\rho w_r$ defines an instantaneous relaxation
rate $\bar{\eta}(t)\rho$, with corresponding timescale
$[\bar{\eta}(t)\rho]^{-1}$. Consequently, the response extends
the drive through an additional dynamical feedback pathway.

\subsection{Parameter mismatch and nominal values}
\label{subsec:parameters}

The time-dependent quantities
$\bar{k}$, $\bar{\sigma}$, $\bar{s}$, $\bar{\xi}$, and
$\bar{\eta}$ are adjustable response coefficients updated by
the adaptive laws. They serve as estimates of the fixed nominal
target parameters $k$, $\sigma$, $s$, $\xi$, and $\eta$,
respectively. The parameters $k$, $s$, and $\xi$ occur in the
drive and have corresponding adjustable coefficients in the
response. The parameters $\sigma$ and $\eta$ characterize the
additional response pathway and have no counterparts in the
four-dimensional drive.

Using the convention of estimate minus target, we define
\begin{equation}
\begin{aligned}
\hat{k}&=\bar{k}-k,
&
\hat{\sigma}&=\bar{\sigma}-\sigma,
&
\hat{s}&=\bar{s}-s,
&
\hat{\xi}&=\bar{\xi}-\xi,
&
\hat{\eta}&=\bar{\eta}-\eta.
\end{aligned}
\label{eq:param_errors}
\end{equation}
The nominal target values specify the numerical model and
provide reference values for evaluating the parameter-estimation
errors; they are not supplied to the adaptive controller.
Unless stated otherwise, the known parameters are fixed at
$a=1.0$, $b=3.0$, $c=1.0$, $p=5.0$,
$f=0.2$, $h=0.3$, $m=0.5$, $r=0.006$,
$x_0=-1.56$, $I=3.1$, $y_0=-1.619$,
$\gamma=3.0$, and $\rho=0.9573$.
The parameters $y_0$, $\gamma$, and $\rho$ enter only the
response system; the other known parameters are shared by
both systems. The nominal target values are
$k=0.87$, $\sigma=0.0278$, $s=5.2$,
$\xi=1.0$, and $\eta=0.0009$.
The drive evolves with fixed values of $k$, $s$, and $\xi$,
whereas the five adjustable response coefficients evolve
according to the adaptive laws.

\subsection{Models dynamics}
\label{subsec:model_dynamics}

To characterize the uncontrolled dynamics, we integrate
Eqs.~\eqref{eq:drive_system} and \eqref{eq:response_system}
independently with $\mathbf{v}=\mathbf{0}$ and
$(\bar{k},\bar{\sigma},\bar{s},\bar{\xi},\bar{\eta})
=(k,\sigma,s,\xi,\eta)$, using the initial conditions
\[
\mathbf{X}_d(0)=(0.1,0.1,0.1,0.1)^\top,
\qquad
\mathbf{X}_r(0)=(0.5,0.5,0.5,0.5,0)^\top.
\]
Integration uses the adaptive Dormand--Prince Runge--Kutta
method of order $5(4)$ \cite{Dormand.1980}, implemented as
\texttt{RK45} in \texttt{SciPy} \cite{Virtanen.2020},
with relative and absolute tolerances $10^{-6}$ and
$10^{-8}$, respectively.

Figure~\ref{fig:model_dynamics} illustrates some dynamical behaviors. After discarding the initial $6000$ time units,
the voltage traces over $t\in[6000,6800]$ exhibit bursting
in the drive and sustained spiking in the response.
The corresponding trajectory projections illustrate
the geometry of the dynamics.

\begin{figure}
\centering
\includegraphics[width=0.75\textwidth]{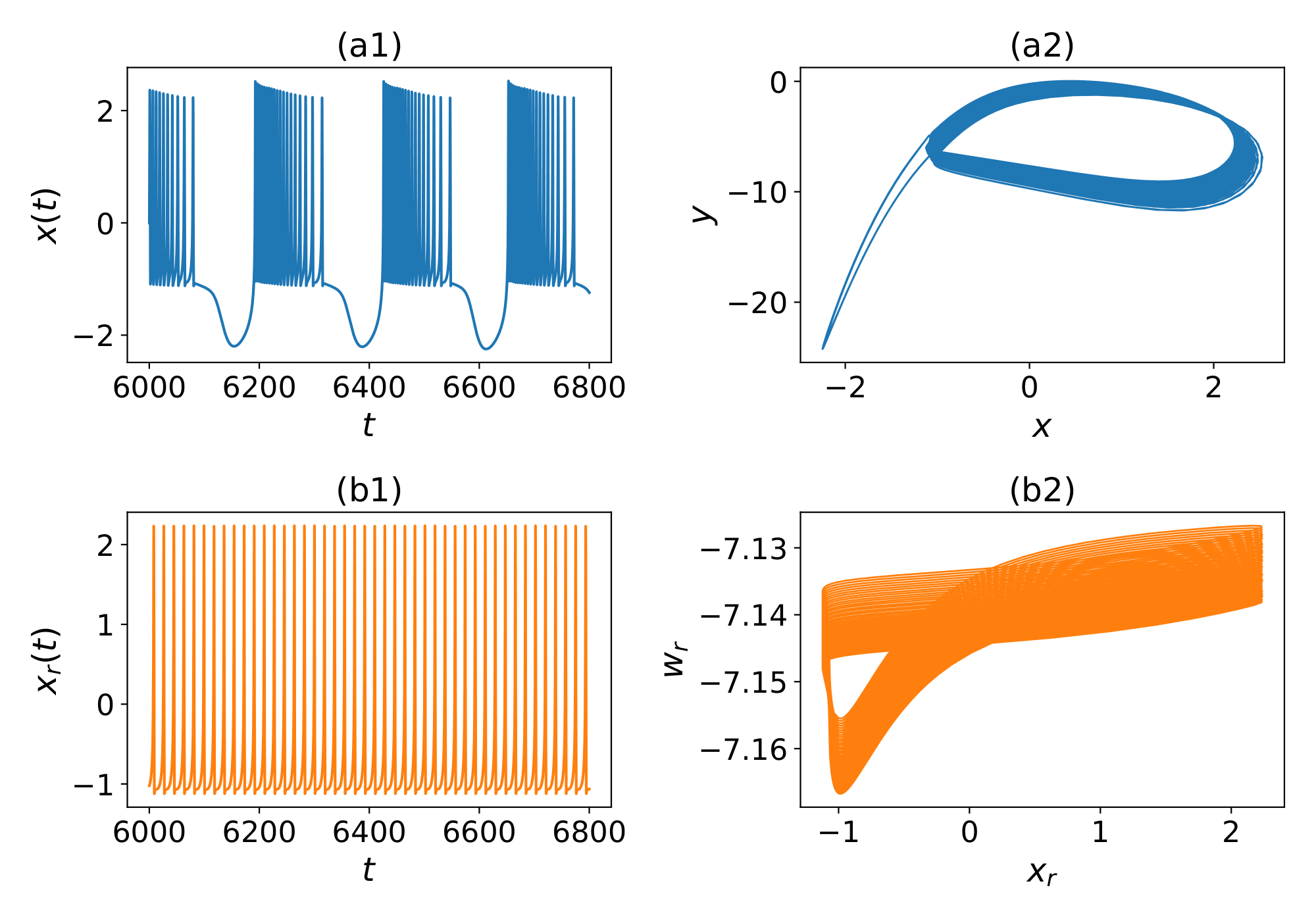}
\caption{Uncontrolled dynamics at $I=5.1$.
(a1) Drive membrane potential $x(t)$ and
(a2) trajectory projection onto the $(x,y)$ plane.
(b1) Response membrane potential $x_r(t)$ and
(b2) trajectory projection onto the $(x_r,w_r)$ plane.
Voltage traces are shown for $t\in[6000,6800]$.
All other parameter values are as specified in the main text.}
\label{fig:model_dynamics}
\end{figure}

To examine some parameter-dependent transitions, we construct
numerical bifurcation diagrams from post-transient local
maxima of $x(t)$ and $x_r(t)$ and estimate the two largest
Lyapunov exponents, $\lambda_1\geq\lambda_2$, from the
associated variational equations using the Benettin algorithm.
Figure~\ref{fig:bifurcation_lyapunov} shows the dependence
of the drive dynamics on $s$ and the response dynamics on $b$.
Intervals with $\lambda_1>0$ provide numerical evidence of
chaos and are interspersed with windows consistent with
periodic dynamics. 

\begin{figure}[ht]
\centering
\includegraphics[width=0.75\textwidth]{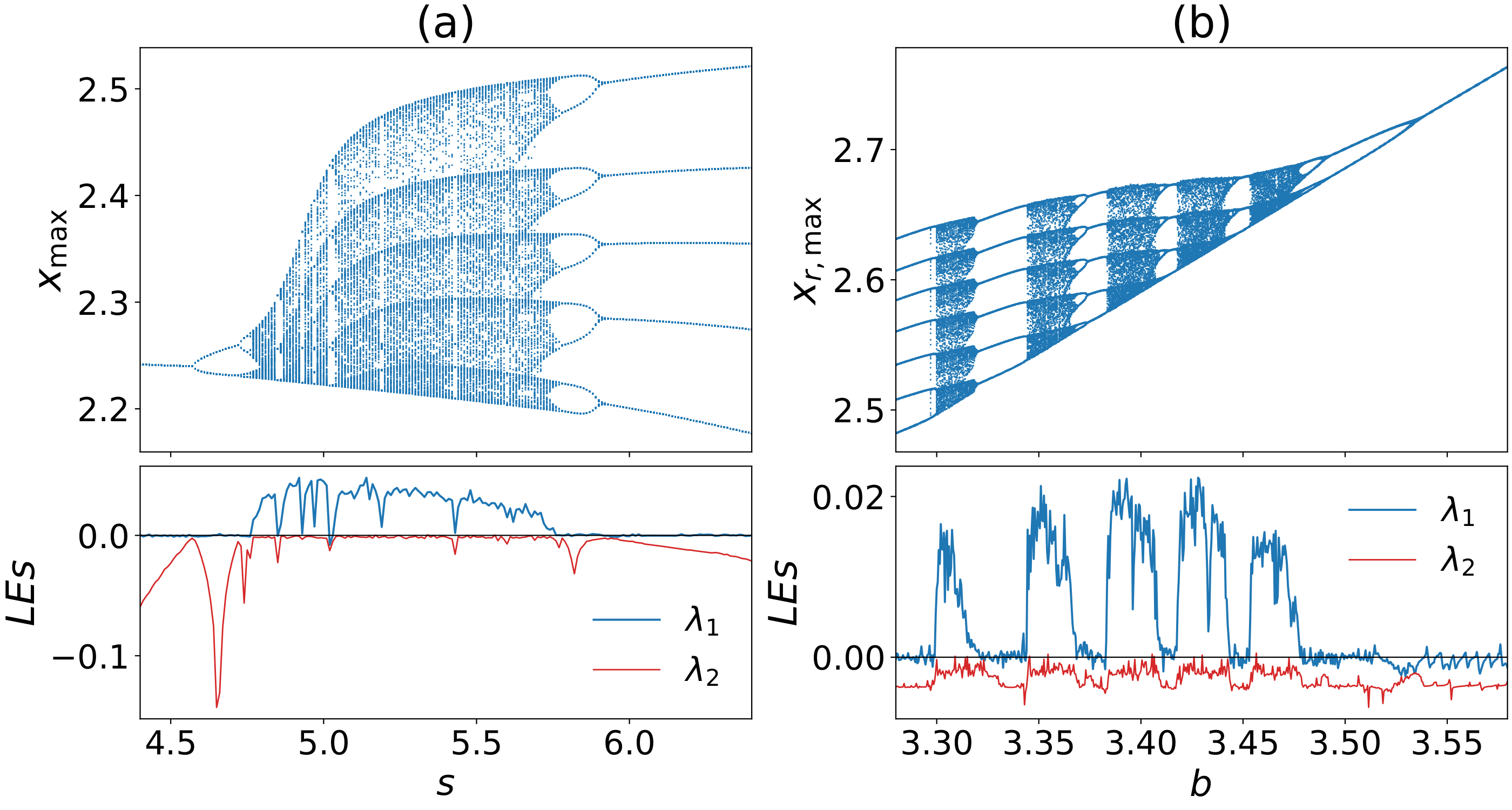}
\caption{Bifurcation diagrams (top) and the corresponding two
largest Lyapunov exponents $\lambda_1\geq\lambda_2$ (bottom)
for the uncontrolled drive and response systems, respectively.
(a) Post-transient local maxima of $x(t)$ versus
$s\in[4.40,6.40]$, with $I=5.0$ and $r=0.02$.
(b) Post-transient local maxima of $x_r(t)$ versus
$b\in[3.28,3.58]$. All other parameter values are as specified in the main text.}
\label{fig:bifurcation_lyapunov}
\end{figure}


\section{The IOGS problem: an analytical approach}
\label{sec:iogs}

We study an IOGS problem between the four-dimensional memristive HR drive system and the five-dimensional response system with parameter uncertainty. We first define a nonlinear map from the drive state space to the response state space, specifying a synchronization manifold and its associated error variables. This construction assigns a target to each response variable, including the additional adaptation variable $w_r$, which has no counterpart in the drive. We then formulate feedback controllers and adaptive parameter-update laws and derive the closed-loop error dynamics, retaining the memristive switching contribution explicitly. A case-wise analysis of this contribution identifies the switching configurations that ensure strict dissipation in the memristive error channel. These estimates form the basis for establishing our two main theoretical results: asymptotic IOGS under a uniform dissipativity condition on the memristive contribution, and a practical Lyapunov-type estimate without this condition. The latter provides an explicit error threshold above which the adaptive Lyapunov function is strictly decreasing, but does not by itself establish ultimate boundedness of the synchronization error.

\subsection{Nonlinear increased-order synchronization manifold}
\label{subsec:manifold}
We define IOGS through a nonlinear map $\varphi:\mathbb R^4\to\mathbb R^5$, which in the present work we choose as
\begin{equation}
\varphi(x,y,z,u)
=
\begin{pmatrix}
x+\alpha_1\sin x\\[1mm]
y+\alpha_2x^2+\alpha_3u^2\\[1mm]
z+\alpha_4xy\\[1mm]
u\\[1mm]
\beta_0+\beta_1x+\beta_2y+\beta_3z+\beta_4xu+\beta_5\cos u
\end{pmatrix}.
\label{eq:nonlinear_phi}
\end{equation}

Here $\alpha_1,\alpha_2,\alpha_3,\alpha_4,\beta_0,\beta_1,\beta_2,\beta_3,\beta_4,\beta_5\in\mathbb{R}$. At least one nonlinear coefficient is assumed to be nonzero: $(\alpha_1,\alpha_2,\alpha_3,\alpha_4,\beta_4,\beta_5)\neq 0$.

The fourth component is chosen as $\varphi_4(x,y,z,u)=u$. This choice is crucial because it preserves the natural memristive error $e_4=u_r-u$. Thus the switchable memristor contribution remains explicit in the error dynamics and is not hidden by the controller.

The IOGS error is defined by
\begin{equation}
\mathbf{e}=\mathbf{X}_r-\varphi(\mathbf{X}_d),
\label{eq:error_definition}
\end{equation}
that is,
\begin{equation}
\begin{cases}
e_1
=
x_r-\big(x+\alpha_1\sin x\big),\\[1mm]
e_2
=
y_r-\big(y+\alpha_2x^2+\alpha_3u^2\big),\\[1mm]
e_3
=
z_r-\big(z+\alpha_4xy\big),\\[1mm]
e_4
=
u_r-u,\\[1mm]
e_5
=
w_r-\big(\beta_0+\beta_1x+\beta_2y+\beta_3z+\beta_4xu+\beta_5\cos u\big).
\end{cases}
\label{eq:error_components}
\end{equation}
The associated IOGS manifold is
\begin{equation}
\mathcal M_{\rm IOGS}
=
\left\{
(\mathbf{X}_d,\mathbf{X}_r)\in\mathbb R^4\times\mathbb R^5:
\mathbf{X}_r=\varphi(\mathbf{X}_d)
\right\}.
\label{eq:iogs_manifold}
\end{equation}

\begin{definition}\label{def_1}
\textit{The response system Eq. \eqref{eq:response_system} is said to achieve \textit{increased-order generalized synchronization} with the drive system Eq. \eqref{eq:drive_system} if}
\begin{equation}
\lim_{t\to\infty}
\left\|
\mathbf{X}_r(t)-\varphi(\mathbf{X}_d(t))
\right\|
=0.
\label{eq:iogs_definition}
\end{equation}
\end{definition}
\begin{definition}\label{def_2}
\textit{The response system Eq. \eqref{eq:response_system} is said to achieve \textit{practical increased-order generalized synchronization} with the drive system Eq. \eqref{eq:drive_system} if there exists \(\delta>0\) such that}
\begin{equation}
\limsup_{t\to\infty}
\left\|
\mathbf{X}_r(t)-\varphi(\mathbf{X}_d(t))
\right\|
\le \delta.
\label{eq:practical_iogs_definition}
\end{equation}
\end{definition}

\subsection{Controller design and adaptive laws}
\label{subsec:controller_design}

Since $e_i=X_{r,i}-\varphi_i(\mathbf{X}_d)$, we have
\begin{equation}
\dot e_i=\dot X_{r,i}-\dot\varphi_i(\mathbf{X}_d),
\qquad i=1,\ldots,5.
\label{eq:error_dynamics}
\end{equation}

The derivatives of the components of the nonlinear synchronization map are
\begin{equation}
\begin{cases}
\begin{aligned}
\dot\varphi_1
&=
(1+\alpha_1\cos x)F_1,\\[1mm]
\dot\varphi_2
&=
F_2+2\alpha_2xF_1+2\alpha_3uF_4,\\[1mm]
\dot\varphi_3
&=
F_3+\alpha_4yF_1+\alpha_4xF_2,\\[1mm]
\dot\varphi_4
&=
F_4,\\[1mm]
\dot\varphi_5
&=
(\beta_1+\beta_4u)F_1
+\beta_2F_2
+\beta_3F_3
+(\beta_4x-\beta_5\sin u)F_4.
\end{aligned}
\end{cases}
\label{eq:phi_derivatives}
\end{equation}

We introduce the response regressors
\begin{equation}
\begin{cases}
\Phi_1=(h+fu_r^2)x_r,\\
\Phi_2=w_r,\\
\Phi_3=r(x_r-x_0),\\
\Phi_4=u_r,\\
\Phi_5=\gamma(y_r-y_0)-\rho w_r.
\end{cases}
\label{eq:regressors}
\end{equation}

Let $\lambda_1,\lambda_2,\lambda_3,\lambda_5>0$ and choose control inputs as
\begin{equation}
\begin{cases}
\begin{aligned}
v_1={}&
\dot\varphi_1
-
\Big[
y_r-a x_r^3+b x_r^2+k\Phi_1-z_r+I
\Big]
-\lambda_1e_1,
\\[1mm]
v_2={}&
\dot\varphi_2
-
\Big[
c-px_r^2-y_r-\sigma\Phi_2
\Big]
-\lambda_2e_2,
\\[1mm]
v_3={}&
\dot\varphi_3
-
r\big[s(x_r-x_0)-z_r\big]
-\lambda_3e_3,
\\[1mm]
v_4={}&
x-x_r,
\\[1mm]
v_5={}&
\dot\varphi_5
-
\eta\Phi_5
-\lambda_5e_5.
\end{aligned}
\end{cases}
\label{eq:controllers_explicit}
\end{equation}
The fourth controller \(v_4=x-x_r\) cancels only the direct mismatch induced by the membrane-potential term in the \(u\)-equation. It does not cancel the memristive switching difference \(G(u_r)-G(u)\).

The adaptive update laws are selected as
\begin{equation}
\begin{cases}
\begin{aligned}
\dot{\bar{k}}
&=
-\Phi_1e_1,\\
\dot{\bar{\sigma}}
&=
\Phi_2e_2,\\
\dot{\bar{s}}
&=
-\Phi_3e_3,\\
\dot{\bar{\xi}}
&=
\Phi_4e_4,\\
\dot{\bar{\eta}}
&=
-\Phi_5e_5.
\end{aligned}
\end{cases}
\label{eq:update_laws}
\end{equation}
Since the target parameters are constant, the parameter-error dynamics satisfy
\begin{equation}
\begin{cases}
\begin{aligned}
\dot{\hat{k}}=\dot{\bar{k}},\\
\dot{\hat{\sigma}}=\dot{\bar{\sigma}},\\
\dot{\hat{s}}=\dot{\bar{s}},\\
\dot{\hat{\xi}}=\dot{\bar{\xi}},\\
\dot{\hat{\eta}}=\dot{\bar{\eta}}.
\end{aligned}
\end{cases}
\label{eq:update_laws_2}
\end{equation}

Substituting Eq. \eqref{eq:controllers_explicit} into the error equations in Eq. \eqref{eq:error_dynamics} gives the closed-loop adaptive error system
\begin{equation}
\begin{cases}
\dot e_1
=
-\lambda_1e_1+\hat{k}\Phi_1,\\[1mm]
\dot e_2
=
-\lambda_2e_2-\hat{\sigma}\Phi_2,\\[1mm]
\dot e_3
=
-\lambda_3e_3+\hat{s}\Phi_3,\\[1mm]
\dot e_4
=
-e_4+m\big[G(u_r)-G(u)\big]-\hat{\xi}\Phi_4,\\[1mm]
\dot e_5
=
-\lambda_5e_5+\hat{\eta}\Phi_5.
\end{cases}
\label{eq:closed_loop_error_system_adaptive}
\end{equation}

\subsection{Case-wise analysis of the memristive contribution}
\label{subsec:casewise}

The memristive contribution to the Lyapunov derivative is
\begin{equation}
\Psi(e_4;u,u_r,m)
:=
\left[-e_4+m\big(G(u_r)-G(u)\big)\right]e_4,
\qquad
e_4=u_r-u.
\label{eq:Psi_definition}
\end{equation}
Since $G(u)$ is given piecewise by Eq. \eqref{eq:memristor_nonlinearity}, $\Psi$ admits the following case-wise representation:
\begin{equation}
\Psi(e_4;u,u_r,m)
=
\begin{cases}
-e_4^2,
& \mathcal C_1:\ u\ge1,\ u_r\ge1,\\[1mm]
(2m-1)e_4^2+2m(u-1)e_4,
& \mathcal C_2:\ u\ge1,\ -1<u_r<1,\\[1mm]
-e_4^2-4me_4,
& \mathcal C_3:\ u\ge1,\ u_r\le-1,\\[1mm]
-e_4^2+2m(1-u)e_4,
& \mathcal C_4:\ -1<u<1,\ u_r\ge1,\\[1mm]
(2m-1)e_4^2,
& \mathcal C_5:\ -1<u<1,\ -1<u_r<1,\\[1mm]
-e_4^2-2m(1+u)e_4,
& \mathcal C_6:\ -1<u<1,\ u_r\le-1,\\[1mm]
-e_4^2+4me_4,
& \mathcal C_7:\ u\le-1,\ u_r\ge1,\\[1mm]
(2m-1)e_4^2+2m(u+1)e_4,
& \mathcal C_8:\ u\le-1,\ -1<u_r<1,\\[1mm]
-e_4^2,
& \mathcal C_9:\ u\le-1,\ u_r\le-1.
\end{cases}
\label{eq:Psi_piecewise}
\end{equation}

For $e_4\neq0$, define the normalized memristive coefficient
\begin{equation}
\Theta(u,u_r,m)
=
\frac{\Psi(e_4;u,u_r,m)}{e_4^2}.
\label{eq:normalized_coeff}
\end{equation}
The trajectory is said to remain in a strictly dissipative memristive switching regime if there exists $\mu>0$ such that
\begin{equation}
\Theta(u(t),u_r(t),m)\le -\mu,
\qquad t\ge0.
\label{eq:dissipative_set_condition}
\end{equation}
Equivalently, $\Psi(e_4(t);u(t),u_r(t),m)\le -\mu e_4^2(t)$ for all $t\ge0$.

More explicitly, the dissipative switching set associated with a given $\mu>0$ is
\begin{equation}
\mathcal D_\mu
=
\left\{
(u,u_r):
\Psi(e_4;u,u_r,m)\le -\mu e_4^2
\right\}.
\label{eq:dissipative_set}
\end{equation}
Using Eq. \eqref{eq:Psi_piecewise}, this set is the union of the following case-wise regions:
\begin{equation}
\begin{aligned}
\mathcal D_\mu
={}&
\mathcal C_1
\cup
\mathcal C_9
\\
&\cup
\left\{
\mathcal C_2:
(2m-1)e_4^2+2m(u-1)e_4
\le -\mu e_4^2
\right\}
\\
&\cup
\left\{
\mathcal C_3:
-e_4^2-4me_4
\le -\mu e_4^2
\right\}
\\
&\cup
\left\{
\mathcal C_4:
-e_4^2+2m(1-u)e_4
\le -\mu e_4^2
\right\}
\\
&\cup
\left\{
\mathcal C_5:
(2m-1)e_4^2
\le -\mu e_4^2
\right\}
\\
&\cup
\left\{
\mathcal C_6:
-e_4^2-2m(1+u)e_4
\le -\mu e_4^2
\right\}
\\
&\cup
\left\{
\mathcal C_7:
-e_4^2+4me_4
\le -\mu e_4^2
\right\}
\\
&\cup
\left\{
\mathcal C_8:
(2m-1)e_4^2+2m(u+1)e_4
\le -\mu e_4^2
\right\}.
\end{aligned}
\label{eq:dissipative_cases}
\end{equation}
Thus, the purely outer cases $\mathcal C_1$ and $\mathcal C_9$ are always dissipative with $\mu\le1$. The remaining cases may be dissipative or non-dissipative depending on $m$ and on the instantaneous position of the trajectory within the corresponding switching region. Figure \ref{fig:switching_regimes} illustrates the nine switching cases in the $(u,u_r)$ plane, color-coded by their dissipativity character.

\begin{figure}[ht]
\centering
\includegraphics[width=0.65\textwidth]{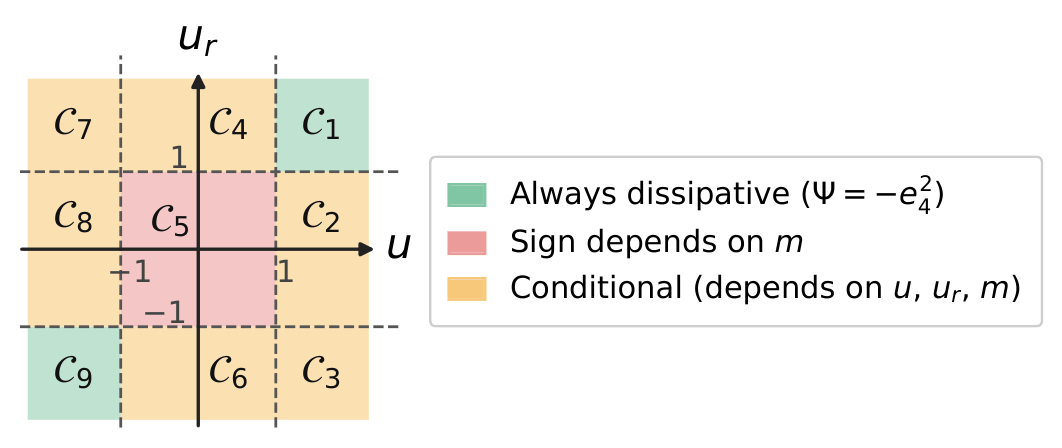}
\caption{Partition of the $(u,u_r)$ plane into the nine switching cases $\mathcal{C}_1,\ldots,\mathcal{C}_9$ of $\Psi$ defined in Eq. \eqref{eq:Psi_piecewise}. Green regions ($\mathcal{C}_1$, $\mathcal{C}_9$) are always dissipative, with $\Psi=-e_4^2$. The red region ($\mathcal{C}_5$) has a sign that depends on $m$. Amber regions are conditional, with dissipativity depending on the trajectory $(u(t),u_r(t))$ and $m$.}
\label{fig:switching_regimes}
\end{figure}

The complementary set $\mathcal N_\mu = \mathbb{R}^2\setminus\mathcal D_\mu$ is referred to as the non-dissipative memristive switching set. In this set, the strict estimate $\Psi(e_4;u,u_r,m)\le -\mu e_4^2$ is not guaranteed. 
The following section establishes our main theoretical results,
first under the uniform dissipativity condition and then without
this restriction.

\section{Main theoretical results on IOGS}
\label{sec:main_results}

We now establish the two main theoretical results of this paper using the
closed-loop error dynamics and the estimates for the memristive
contribution derived above.

\subsection{Adaptive asymptotic IOGS under dissipative memristive switching}
\label{subsec:adaptive_asymptotic}
When the trajectory $(u(t),u_r(t))$ remains in $\mathcal D_\mu$
for some fixed $\mu>0$, the adaptive laws and the uniform dissipativity estimate yield a Lyapunov inequality that ensures asymptotic convergence of the synchronization error. The following theorem states the required assumptions and establishes boundedness of the parameter-estimation errors.

\begin{theorem}
\label{thm:adaptive_asymptotic}
Consider the drive system Eq. \eqref{eq:drive_system}, the controlled response system Eq. \eqref{eq:response_system}, and the nonlinear synchronization map Eq. \eqref{eq:nonlinear_phi}. Let the controllers be given by Eq. \eqref{eq:controllers_explicit}, and let the adaptive laws be given by Eq. \eqref{eq:update_laws}. Assume that the drive trajectory is bounded and that the closed-loop solution exists for all $t\ge0$. Assume further that the memristive switching trajectory remains in the dissipative set $\mathcal D_\mu$ for some constant $\mu>0$, that is, $(u(t),u_r(t))\in\mathcal D_\mu$ for all $t\ge0$, equivalently
\begin{equation}
\Psi(e_4(t);u(t),u_r(t),m)
\le
-\mu e_4^2(t),
\qquad t\ge0.
\label{eq:dissipative_assumption_adaptive}
\end{equation}
Then the increase-order synchronization error satisfies
\begin{equation}
\lim_{t\to\infty}\|\mathbf{e}(t)\|=0.
\label{eq:adaptive_asymptotic_conclusion}
\end{equation}
Consequently, the response system achieves increased-order generalized synchronization with the drive system:
\begin{equation}
\lim_{t\to\infty}
\left\|
\mathbf{X}_r(t)-\varphi(\mathbf{X}_d(t))
\right\|
=0.
\label{eq:adaptive_iogs_conclusion}
\end{equation}
Moreover, the parameter-estimation errors remain bounded.
\end{theorem}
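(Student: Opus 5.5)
The plan is a standard adaptive Lyapunov argument on the closed-loop error system \eqref{eq:closed_loop_error_system_adaptive}, followed by Barbalat's lemma. I would take the composite function
\begin{equation*}
V(t)=\tfrac12\sum_{i=1}^5 e_i^2+\tfrac12\big(\hat k^2+\hat\sigma^2+\hat s^2+\hat\xi^2+\hat\eta^2\big),
\end{equation*}
which is positive definite in $(\mathbf e,\hat k,\hat\sigma,\hat s,\hat\xi,\hat\eta)$ and radially unbounded.

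First, I would differentiate $V$ along solutions, using \eqref{eq:closed_loop_error_system_adaptive} together with \eqref{eq:update_laws} and \eqref{eq:update_laws_2}. The signs in the update laws are chosen so that every cross term cancels pairwise:
\begin{itemize}
\item $\hat k\Phi_1e_1$ cancels against $\hat k\dot{\hat k}=-\hat k\Phi_1e_1$;
\item $-\hat\sigma\Phi_2e_2$ cancels against $\hat\sigma\Phi_2e_2$;
\item $\hat s\Phi_3e_3$ cancels against $-\hat s\Phi_3e_3$;
\item $-\hat\xi\Phi_4e_4$ cancels against $\hat\xi\Phi_4e_4$;
\item $\hat\eta\Phi_5e_5$ cancels against $-\hat\eta\Phi_5e_5$.
\end{itemize}
What remains is $\dot V=-\lambda_1e_1^2-\lambda_2e_2^2-\lambda_3e_3^2+\Psi(e_4;u,u_r,m)-\lambda_5e_5^2$, with $\Psi$ as in \eqref{eq:Psi_definition}. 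The dissipativity hypothesis \eqref{eq:dissipative_assumption_adaptive} then gives
\begin{equation*}
\dot V\le -c\|\mathbf e\|^2,\qquad c:=\min\{\lambda_1,\lambda_2,\lambda_3,\mu,\lambda_5\}>0.
\end{equation*}

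Second, I would draw the consequences of this inequality. Since $V$ is nonincreasing, $V(t)\le V(0)$, so $\mathbf e$ and all five parameter-estimation errors are bounded on $[0,\infty)$. This already proves the final claim of the theorem. Integrating the inequality gives $c\int_0^\infty\|\mathbf e\|^2\,dt\le V(0)<\infty$, so $\mathbf e\in L^2$.

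Third, I would show $\dot{\mathbf e}$ is bounded so that Barbalat's lemma applies. Because the drive trajectory is bounded and $\varphi$ is continuous, $\varphi(\mathbf X_d)$ is bounded. Hence $\mathbf X_r=\mathbf e+\varphi(\mathbf X_d)$ is bounded, and so are the regressors $\Phi_1,\dots,\Phi_5$ in \eqref{eq:regressors}. Since $|G|\le2$ and $u,u_r$ are bounded, the term $m[G(u_r)-G(u)]$ is bounded as well. Combined with the bounded parameter errors, every right-hand side in \eqref{eq:closed_loop_error_system_adaptive} is bounded, so $\mathbf e$ is uniformly continuous.

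Finally, $\|\mathbf e\|^2$ is then uniformly continuous, since it is the square of a bounded, uniformly continuous function, and it is integrable. Barbalat's lemma gives $\|\mathbf e(t)\|\to0$, which is \eqref{eq:adaptive_asymptotic_conclusion}, and \eqref{eq:adaptive_iogs_conclusion} follows from the definition \eqref{eq:error_definition}.

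I expect the main obstacle to be this boundedness chain, not the computation of $\dot V$. It is the step where the hypotheses on boundedness of the drive trajectory and on global existence of the closed-loop solution are really used: $V$ alone controls only $\mathbf e$ and the parameter errors, not $\mathbf X_r$ directly. A secondary point is that $\Psi$ is only piecewise smooth through $G$. This is harmless, because $G$ is globally Lipschitz, so the right-hand side is locally Lipschitz and $V$ is $C^1$ along solutions.
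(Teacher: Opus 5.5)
Your proposal is correct and follows the paper's proof. It uses the same composite Lyapunov function with pairwise cancellation of the parameter cross terms and the bound $\dot V\le -c\|\mathbf e\|^2$ from the dissipativity hypothesis; it then runs the boundedness chain through $\mathbf X_r=\mathbf e+\varphi(\mathbf X_d)$ to get $\dot{\mathbf e}\in L^\infty$ and applies Barbalat's lemma to $\|\mathbf e\|^2\in L^1$. You also state explicitly that the bounded parameter errors and $|G|\le 2$ are what make $\dot{\mathbf e}$ bounded, a step the paper leaves implicit.
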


\begin{proof}
Consider the adaptive Lyapunov function
\begin{equation}
V
=
\frac12
\left(
e_1^2+e_2^2+e_3^2+e_4^2+e_5^2
+\hat{k}^2+\hat{\sigma}^2+\hat{s}^2
+\hat{\xi}^2+\hat{\eta}^2
\right)\ge0.
\label{eq:adaptive_lyapunov}
\end{equation}
Along trajectories of Eq. \eqref{eq:closed_loop_error_system_adaptive},
\begin{align}
\dot V
={}&
e_1\dot e_1
+e_2\dot e_2
+e_3\dot e_3
+e_4\dot e_4
+e_5\dot e_5
+
\hat{k}\dot{\bar{k}}
+\hat{\sigma}\dot{\bar{\sigma}}
+\hat{s}\dot{\bar{s}}
+\hat{\xi}\dot{\bar{\xi}}
+\hat{\eta}\dot{\bar{\eta}}.
\label{eq:Vdot_expanded_start}
\end{align}
Substituting Eq. \eqref{eq:closed_loop_error_system_adaptive} gives
\begin{align}
\dot V
={}&
-\lambda_1e_1^2
+\Phi_1\hat{k}e_1
-\lambda_2e_2^2
-\Phi_2\hat{\sigma}e_2
-\lambda_3e_3^2
+\Phi_3\hat{s}e_3
\nonumber\\
&-e_4^2
+\left[m\big(G(u_r)-G(u)\big)-\Phi_4\hat{\xi}\right]e_4
-\lambda_5e_5^2
+\Phi_5\hat{\eta}e_5
\nonumber\\
&+
\hat{k}\dot{\bar{k}}
+\hat{\sigma}\dot{\bar{\sigma}}
+\hat{s}\dot{\bar{s}}
+\hat{\xi}\dot{\bar{\xi}}
+\hat{\eta}\dot{\bar{\eta}}.
\label{eq:Vdot_expanded}
\end{align}
Using the adaptive laws Eq. \eqref{eq:update_laws}, all parameter-error cross terms cancel:
\begin{equation}
\begin{gathered}
\Phi_1\hat{k}e_1+\hat{k}\dot{\bar{k}}=0,\qquad
-\Phi_2\hat{\sigma}e_2+\hat{\sigma}\dot{\bar{\sigma}}=0,\\
\Phi_3\hat{s}e_3+\hat{s}\dot{\bar{s}}=0,\qquad
-\Phi_4\hat{\xi}e_4+\hat{\xi}\dot{\bar{\xi}}=0,\qquad
\Phi_5\hat{\eta}e_5+\hat{\eta}\dot{\bar{\eta}}=0.
\end{gathered}
\label{eq:cross_term_cancellation}
\end{equation}
Therefore,
\begin{equation}
\dot V
=
-\lambda_1e_1^2
-\lambda_2e_2^2
-\lambda_3e_3^2
-\lambda_5e_5^2
+
\Psi(e_4;u,u_r,m).
\label{eq:Vdot_after_cancellation}
\end{equation}
Since $(u(t),u_r(t))\in\mathcal D_\mu$, the case-wise definition Eq. \eqref{eq:dissipative_set} gives $\Psi(e_4;u,u_r,m)\le -\mu e_4^2$. Thus
\begin{equation}
\dot V
\le
-\lambda_1e_1^2
-\lambda_2e_2^2
-\lambda_3e_3^2
-\mu e_4^2
-\lambda_5e_5^2.
\label{eq:Vdot_asymptotic_adaptive}
\end{equation}
Let $\alpha=\min\{\lambda_1,\lambda_2,\lambda_3,\mu,\lambda_5\}>0$. Then
\begin{equation}
\dot V
\le
-\alpha\|\mathbf{e}\|^2.
\label{eq:Vdot_alpha_error}
\end{equation}
Hence $V(t)\le V(0)$ for all $t\ge0$. Therefore, the synchronization errors and the parameter-estimation errors are bounded. Since the drive trajectory is bounded (consistent with the existence of chaotic attractors in the HR neuron model, which is observed numerically) and $\mathbf{X}_r=\mathbf{e}+\varphi(\mathbf{X}_d)$, the response trajectory is bounded as well. Consequently, the regressors $\Phi_i$, $i=1,\ldots,5$ in Eq. \eqref{eq:regressors} are bounded. From Eq. \eqref{eq:closed_loop_error_system_adaptive}, it follows that $\dot{\mathbf{e}}\in L^\infty$.

\begin{lemma}[Barbalat's Lemma]
\label{lem:barbalat}
Let $g:[0,\infty)\to\mathbb{R}$ be uniformly continuous and suppose that $\int_0^t g(s)\,ds$ has a finite limit as $t\to\infty$. Then $\displaystyle \lim_{t \to \infty} g(t) = 0$.
\end{lemma}

Integrating Eq. \eqref{eq:Vdot_alpha_error} from $0$ to $t$, we obtain
\begin{equation*}
V(t)-V(0)
\le
-\alpha\int_0^t\|\mathbf{e}(\tau)\|^2\,d\tau.
\end{equation*}
Since $V(t)\ge0$, this implies $\int_0^t\|\mathbf{e}(\tau)\|^2\,d\tau\le\frac{V(0)}{\alpha}$, $t\ge0$. Hence, $\|\mathbf{e}(\cdot)\|^2\in L^1([0,\infty))$.

Let $g(t)=\|\mathbf{e}(t)\|^2$ so that
\begin{equation*}
\dot g=\frac{d}{dt}\|\mathbf{e}(t)\|^2 = 2 \mathbf{e}(t)^\top \dot{\mathbf{e}}(t).
\end{equation*}
Since $\mathbf{e}\in L^\infty$ and $\dot{\mathbf{e}}\in L^\infty$, the function $g$ is Lipschitz and uniformly continuous on $[0,\infty)$. By Barbalat's lemma, $\lim_{t\to\infty}\|\mathbf{e}(t)\|^2=0$, and by the continuity of the square-root function, we have $\lim_{t\to\infty}\|\mathbf{e}(t)\|=0$. Because $\mathbf{e}(t)=\mathbf{X}_r(t)-\varphi(\mathbf{X}_d(t))$, we conclude that
\begin{equation*}
\lim_{t\to\infty}\|\mathbf{X}_r(t)-\varphi(\mathbf{X}_d(t))\|=0.
\end{equation*}
This is precisely the condition in Definition \ref{def_1}. Hence the response system achieves increased-order generalized synchronization with the drive system. Boundedness of the parameter-estimation errors follows from the boundedness of $V(t)$.
\end{proof}

\subsection{Adaptive practical Lyapunov-type estimate under non-dissipative memristive switching}
\label{subsec:adaptive_practical}
When the trajectory $(u(t),u_r(t))$ lies outside $\mathcal D_\mu$,
the prescribed uniform dissipativity condition is not satisfied.
Nevertheless, the following theorem establishes a practical
Lyapunov-type estimate valid irrespective of the active switching
region and identifies an explicit error threshold above which
the adaptive Lyapunov function is strictly decreasing.

\begin{theorem}
\label{thm:adaptive_practical_lyapunov}
Consider the same drive-response systems, synchronization map, controllers, and adaptive laws as in Theorem \ref{thm:adaptive_asymptotic}. Suppose that the closed-loop solution exists for all $t\ge0$ and let $\varepsilon\in(0,2)$. Define
\begin{equation}
C=\frac{8m^2}{\varepsilon},
\qquad
\alpha
=
\min
\left\{
\lambda_1,\lambda_2,\lambda_3,\lambda_5,
1-\frac{\varepsilon}{2}
\right\}.
\label{eq:C_alpha_practical_adaptive}
\end{equation}
If $\alpha>0$, then the adaptive Lyapunov function Eq. \eqref{eq:adaptive_lyapunov} satisfies
\begin{equation}
\dot V(t)\le -\alpha\|\mathbf{e}(t)\|^2+C.
\label{eq:Vdot_practical_adaptive}
\end{equation}
Consequently, $\dot V(t)<0$ whenever $\|\mathbf{e}(t)\|>\sqrt{C/\alpha}$, so that the adaptive Lyapunov function decreases whenever the synchronization error lies outside the ball of radius
\begin{equation}
\delta
=
\sqrt{\frac{C}{\alpha}}
=
\sqrt{\frac{8m^2}{\varepsilon\alpha}}.
\label{eq:practical_radius_adaptive}
\end{equation}
Therefore, the closed-loop system satisfies a practical Lyapunov-type bound with respect to the increased-order synchronization error. Equivalently, the adaptive Lyapunov function is strictly decreasing outside the error neighborhood
\begin{equation}
\mathcal B_\delta(\mathcal M_{\rm IOGS})
=
\left\{
(\mathbf{X}_d,\mathbf{X}_r):
\left\|
\mathbf{X}_r-\varphi(\mathbf{X}_d)
\right\|
\le \delta
\right\}.
\label{eq:practical_neighborhood}
\end{equation}
\end{theorem}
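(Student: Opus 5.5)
The plan reuses the same adaptive Lyapunov function $V$ from Eq.~\eqref{eq:adaptive_lyapunov} and repeats, verbatim, the first part of the proof of Theorem~\ref{thm:adaptive_asymptotic}. Differentiating along Eq.~\eqref{eq:closed_loop_error_system_adaptive} and using the adaptive laws Eq.~\eqref{eq:update_laws}, the parameter cross terms cancel exactly as in Eq.~\eqref{eq:cross_term_cancellation}. This yields the identity Eq.~\eqref{eq:Vdot_after_cancellation},
\[
\dot V=-\lambda_1e_1^2-\lambda_2e_2^2-\lambda_3e_3^2-\lambda_5e_5^2+\Psi(e_4;u,u_r,m).
\]
No boundedness of the drive and no dissipativity hypothesis are needed for this step; we only need the closed-loop solution to exist, so that $V$ is differentiable along it.

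The key step is a switching-independent bound on $\Psi$ that avoids the case-wise analysis of Eq.~\eqref{eq:Psi_piecewise}. From Eq.~\eqref{eq:memristor_nonlinearity}, $|G(\cdot)|\le2$ everywhere, so $|G(u_r)-G(u)|\le4$ for all $(u,u_r)$. Hence $\Psi\le -e_4^2+4m|e_4|$. Young's inequality in the form $4m|e_4|\le \frac{\varepsilon}{2}e_4^2+\frac{8m^2}{\varepsilon}$ gives
\[
\Psi\le-\Big(1-\frac{\varepsilon}{2}\Big)e_4^2+C .
\]
This is valid because $\varepsilon\in(0,2)$ makes the coefficient $1-\varepsilon/2$ positive. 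The bound matches exactly the constant $C=8m^2/\varepsilon$ in Eq.~\eqref{eq:C_alpha_practical_adaptive}, which confirms it is the intended route. (Using instead the global Lipschitz bound $|G(u_r)-G(u)|\le 2|e_4|$ would only give $(2m-1)e_4^2$, which for $m=0.5$ is not negative. So the uniform boundedness of $G$, not its Lipschitz property, is what must be exploited.)

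Inserting this bound into the identity and taking $\alpha=\min\{\lambda_1,\lambda_2,\lambda_3,\lambda_5,1-\varepsilon/2\}>0$ gives $\dot V\le-\alpha\|\mathbf e\|^2+C$, which is Eq.~\eqref{eq:Vdot_practical_adaptive}. If $\|\mathbf e(t)\|>\sqrt{C/\alpha}=\delta$, then $-\alpha\|\mathbf e\|^2+C<0$, so $\dot V(t)<0$. Since $\mathbf e=\mathbf X_r-\varphi(\mathbf X_d)$ by Eq.~\eqref{eq:error_definition}, the condition $\|\mathbf e\|>\delta$ is precisely the complement of $\mathcal B_\delta(\mathcal M_{\rm IOGS})$, which gives the neighborhood formulation.

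I expect no serious obstacle; the only delicate point is the Young-inequality constant bookkeeping. The proof should also state explicitly what is \emph{not} claimed. $V$ includes the parameter errors, so decrease of $V$ outside the error ball does not by itself confine $\mathbf e$ to a fixed ball. This is why the theorem stops at the Lyapunov-type decrease estimate rather than asserting ultimate boundedness in the sense of Definition~\ref{def_2}.
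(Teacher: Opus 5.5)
Your proposal is correct and follows the paper's proof essentially step for step: the same cross-term cancellation yielding Eq.~\eqref{eq:Vdot_after_cancellation}, the bound $|G(u_r)-G(u)|\le 4$ from $|G|\le 2$, the same Young inequality giving $C=8m^2/\varepsilon$, and the same radius $\delta=\sqrt{C/\alpha}$. One small fix: write $\Psi\le -e_4^2+4|m|\,|e_4|$, as the paper does, rather than $4m|e_4|$, so that the bound does not silently assume $m\ge 0$.
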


\begin{proof}
Using the adaptive Lyapunov function Eq. \eqref{eq:adaptive_lyapunov} and repeating the cancellation argument from the proof of Theorem \ref{thm:adaptive_asymptotic}, we again arrive at Eq. \eqref{eq:Vdot_after_cancellation}, that is,
\begin{equation*}
\dot V
=
-\lambda_1e_1^2
-\lambda_2e_2^2
-\lambda_3e_3^2
-\lambda_5e_5^2
+
\Psi(e_4;u,u_r,m).
\end{equation*}
It remains to bound the memristive contribution $\Psi(e_4;u,u_r,m)$ under non-dissipative switching. By Eq. \eqref{eq:Psi_definition},
\begin{equation*}
\Psi(e_4;u,u_r,m)
=
-e_4^2
+
m\big[G(u_r)-G(u)\big]e_4.
\end{equation*}
Since $|G(u)|\le2$ for all $u\in\mathbb R$, the triangle inequality gives $|G(u_r)-G(u)|\le|G(u_r)|+|G(u)|\le4$, so that
\begin{equation*}
m\big[G(u_r)-G(u)\big]e_4
\le
|m|\,|G(u_r)-G(u)|\,|e_4|
\le
4|m||e_4|.
\end{equation*}
By Young's inequality, for every $\varepsilon>0$,
\begin{equation*}
4|m||e_4|
\le
\frac{\varepsilon}{2}e_4^2
+
\frac{8m^2}{\varepsilon}.
\end{equation*}
Therefore,
\begin{equation*}
\Psi(e_4;u,u_r,m)
\le
-\left(1-\frac{\varepsilon}{2}\right)e_4^2
+
\frac{8m^2}{\varepsilon}.
\end{equation*}
Substituting this estimate into the expression for $\dot V$, we obtain
\begin{equation*}
\dot V
\le
-\lambda_1e_1^2
-\lambda_2e_2^2
-\lambda_3e_3^2
-\lambda_5e_5^2
-\left(1-\frac{\varepsilon}{2}\right)e_4^2
+
\frac{8m^2}{\varepsilon}.
\end{equation*}
Using the definitions of $C$ and $\alpha$ in Eq. \eqref{eq:C_alpha_practical_adaptive}, this becomes
\begin{equation*}
\dot V
\le
-\alpha\|\mathbf{e}\|^2+C,
\end{equation*}
which is precisely Eq. \eqref{eq:Vdot_practical_adaptive}. Consequently, if $\|\mathbf{e}(t)\|^2>C/\alpha$, then $\dot V(t)<0$. Since $\mathbf{e}(t)=\mathbf{X}_r(t)-\varphi(\mathbf{X}_d(t))$, this ball corresponds to the error neighborhood Eq. \eqref{eq:practical_neighborhood} with radius $\delta=\sqrt{C/\alpha}=\sqrt{8m^2/(\varepsilon\alpha)}$. Thus the closed-loop system satisfies a practical Lyapunov-type bound with respect to the increased-order synchronization error, which proves the stated estimate.
\end{proof}

\begin{remark}
The estimate in Theorem \ref{thm:adaptive_practical_lyapunov} identifies the neighborhood appearing in Definition \ref{def_2} as the natural practical synchronization neighborhood. However, without an additional ultimate-boundedness argument (which is beyond the scope of the current manuscript), it should be interpreted as a Lyapunov-type decrease condition rather than as a proof of the full $\limsup$-condition in Definition \ref{def_2}.
\end{remark}

\begin{remark}
The asymptotic result requires that the trajectory remains in the dissipative switching set $\mathcal D_\mu$. This assumption is case-wise and explicit: at each time, the active switching case in Eq. \eqref{eq:Psi_piecewise} satisfies the corresponding inequality listed in Eq. \eqref{eq:dissipative_cases}. The complementary set $\mathcal N_\mu$ contains the switching configurations for which strict dissipation cannot be guaranteed. In these configurations, the memristive term may inject a non-dissipative residual into the Lyapunov derivative, and the appropriate conclusion is a practical Lyapunov-type estimate rather than exact asymptotic IOGS.
\end{remark}

\begin{remark}
The adaptive laws Eq. \eqref{eq:update_laws} guarantee boundedness of the parameter-estimation errors and cancellation of the corresponding cross terms in the Lyapunov derivative. However, convergence of $\bar{k},\bar{\sigma},\bar{s},\bar{\xi},\bar{\eta}$ to their target values generally requires additional persistence-of-excitation conditions on the regressors $\Phi_i$. Without such excitation assumptions, the synchronization errors may converge to zero while the parameter estimates converge only to constant values.
\end{remark}

\begin{remark}
The fourth component of the synchronization map is chosen as $\varphi_4(x,y,z,u)=u$. Therefore, the memristive synchronization error is exactly $e_4=u_r-u$. This choice prevents the switchable memristive nonlinearity from being removed by control. Instead, the term $m\big[G(u_r)-G(u)\big]$ is retained explicitly in the closed-loop error dynamics and treated by case-wise Lyapunov estimates. This yields two regimes: exact asymptotic increased-order generalized synchronization when the trajectory remains in the dissipative switching set $\mathcal D_\mu$, and a practical Lyapunov-type estimate with an explicit radius when strict dissipativity is not guaranteed.
\end{remark}

\begin{remark}
The practical Lyapunov-type radius $\delta=\sqrt{8m^2/(\varepsilon\alpha)}$ depends on the memristive strength $m$, the Young parameter $\varepsilon$, and the effective dissipation rate $\alpha=\min\{\lambda_1,\allowbreak\lambda_2,\allowbreak\lambda_3,\allowbreak\lambda_5,\allowbreak 1-\varepsilon/2\}$. Increasing the feedback gains $\lambda_i$ increases the dissipative contribution of the controlled channels, while the factor $1-\varepsilon/2$ measures the remaining dissipation in the memristive channel after the Young estimate has been applied.
\end{remark}


\section{Numerical corroboration of IOGS and discussion}
\label{sec:numerical_results}

To assess the adaptive IOGS scheme numerically, we integrate the full closed-loop
drive--response--adaptation system. Its fourteen state variables
comprise the drive state $X_d\in\mathbb{R}^4$, the response state
$X_r\in\mathbb{R}^5$, and the five adaptive parameter estimates
$\bar{k}$, $\bar{\sigma}$, $\bar{s}$, $\bar{\xi}$, and $\bar{\eta}$.
Combining the drive and response systems
in Eqs.~\eqref{eq:drive_system} and \eqref{eq:response_system}
with the controllers in Eq.~\eqref{eq:controllers_explicit}
and the adaptive laws in Eq.~\eqref{eq:update_laws}
yields

\begin{equation}
\label{eq:full-closed-loop-numerical-system}
\frac{d}{dt}
\begin{pmatrix}
x\\
y\\
z\\
u\\
x_r\\
y_r\\
z_r\\
u_r\\
w_r\\
\bar{k}\\
\bar{\sigma}\\
\bar{s}\\
\bar{\xi}\\
\bar{\eta}
\end{pmatrix}
=
\begin{pmatrix}
F_1\\
F_2\\
F_3\\
F_4\\
y_r-a x_r^3+b x_r^2+\bar{k}(h+fu_r^2)x_r-z_r+I+v_1\\
c-px_r^2-y_r-\bar{\sigma}w_r+v_2\\
r\big[\bar{s}(x_r-x_0)-z_r\big]+v_3\\
-\bar{\xi}u_r+mG(u_r)+x_r+v_4\\
\bar{\eta}\big[\gamma(y_r-y_0)-\rho w_r\big]+v_5\\
-\Phi_1e_1\\
\Phi_2e_2\\
-\Phi_3e_3\\
\Phi_4e_4\\
-\Phi_5e_5
\end{pmatrix}.
\end{equation}

IOGS is assessed through the five components of
$\mathbf{e}(t)=X_r(t)-\varphi(X_d(t))$ and their Euclidean norm
\begin{equation}
\label{eq:error-norm-numerical}
\|\mathbf{e}(t)\|
=
\left(\sum_{i=1}^{5}e_i^2(t)\right)^{1/2}.
\end{equation}
Decay of these quantities toward zero over the simulated interval
provides numerical evidence of asymptotic IOGS.

The numerical results are also interpreted in relation to
the practical Lyapunov-type estimate of
Theorem~\ref{thm:adaptive_practical_lyapunov}. In particular,
we examine the sign of the memristive contribution and
its implications for the evolution of the adaptive
Lyapunov function.

The drive is initialized at
$X_d(0)=(0.1,0.1,0.1,0.1)^\top$, as in
Section \ref{subsec:model_dynamics}. For the synchronization map, we set
$\alpha_1=\alpha_2=\alpha_3=\alpha_4=0.1$ and
$(\beta_0,\ldots,\beta_5)=(0,1,1,0,0.1,0.1)$.
The response initial condition is prescribed by
$X_r(0)=\varphi(X_d(0))+(1,1,1,1,-1)^\top$, ensuring
$\mathbf{e}(0)=(1,1,1,1,-1)^\top$ and giving
$X_r(0)\approx(1.1100,1.1020,1.1010,1.1000,-0.6995)^\top$.
All five adaptive parameter estimates are initialized at zero,
and the feedback gains are fixed at
$\lambda_1=\lambda_2=\lambda_3=\lambda_5=2$.
We integrate Eq.~\eqref{eq:full-closed-loop-numerical-system}
over $t\in[0,5\times10^5]$ using the numerical method and
tolerances specified in Section \ref{subsec:model_dynamics}.
The results are displayed on logarithmic time axes to resolve
both the initial transient and the subsequent slow relaxation.

\begin{figure}[ht]
\centering
\includegraphics[width=0.6\textwidth]{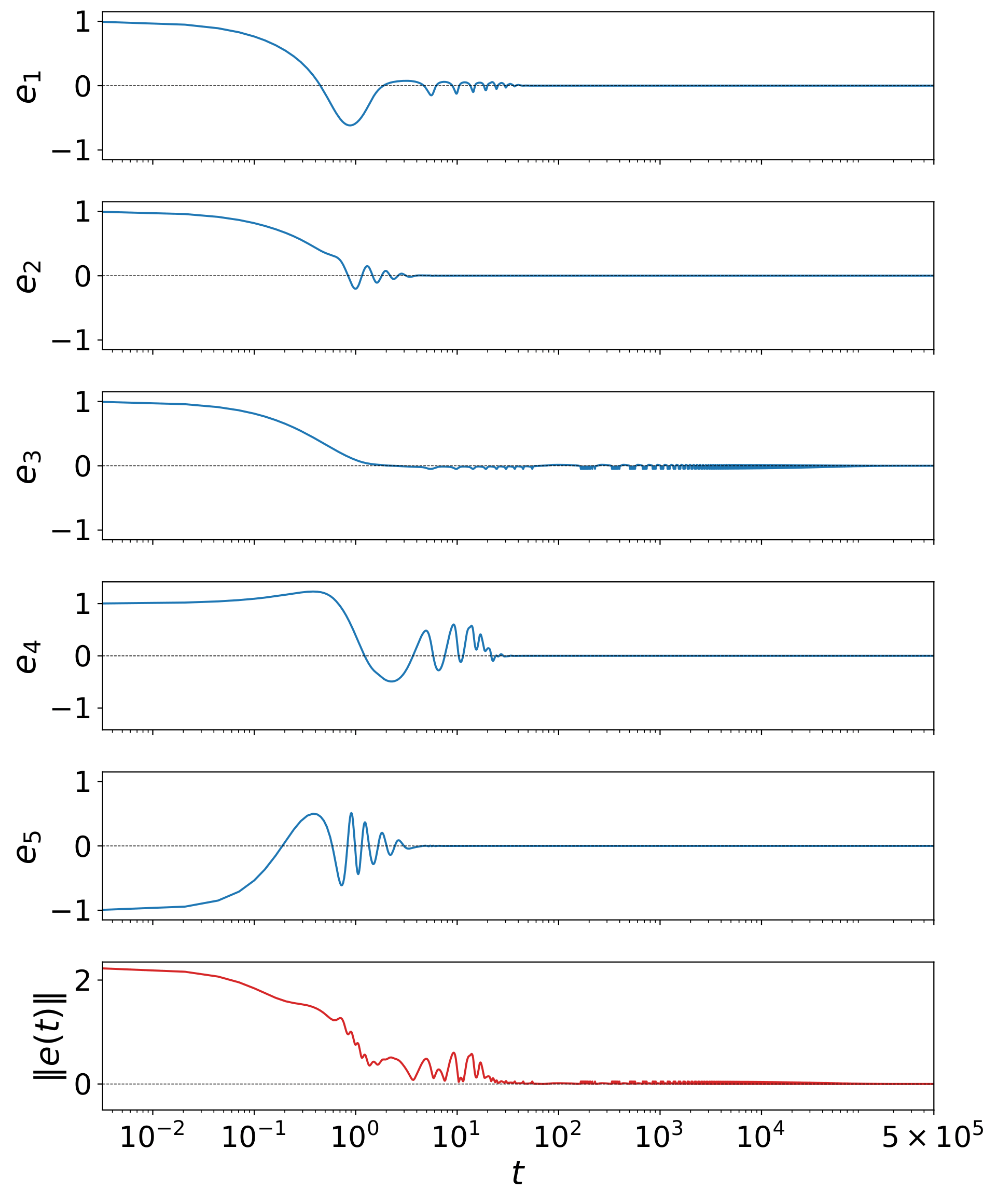}
\caption{Time evolution of the synchronization errors $e_1(t),\ldots,e_5(t)$ and their Euclidean norm $\|e(t)\|$. Logarithmic time axes resolve the transient oscillations and the subsequent relaxation toward zero.}
\label{fig:sync_errors}
\end{figure}

Figure~\ref{fig:sync_errors} shows the synchronization errors and
their Euclidean norm. Following transient overshoots and oscillations, all five
components approach zero, with $e_3$ exhibiting a longer
relaxation than the other components. The error norm decreases
nonmonotonically, reaching approximately $6.19\times10^{-7}$ at
the final integration time. Its maximum observed value over the
final $20\%$ of the simulated interval is $1.22\times10^{-5}$.
These results are consistent with asymptotic IOGS for the
trajectory considered.

Figure~\ref{fig:parameter_errors} shows the parameter-estimation
errors $\widehat{k}(t)$, $\widehat{\sigma}(t)$, $\widehat{s}(t)$,
$\widehat{\xi}(t)$, and $\widehat{\eta}(t)$, defined previously
as the differences between the adaptive estimates and their
nominal target values. All five errors approach zero over the
simulated interval, although the relaxation of $\widehat{s}$
is substantially slower than that of the other components.
At the final integration time,
$\widehat{s}\approx-1.53\times10^{-4}$, while the absolute
values of the remaining parameter-estimation errors are below
$3\times10^{-9}$. These values indicate accurate recovery of
the nominal parameters for the computed trajectory.
Synchronization alone, however, does not imply parameter
identification. A theoretical guarantee of convergence of the
adaptive estimates to their targets requires additional analysis, typically involving excitation conditions on the regressors; such conditions, beyond the scope of the current study, are not established here.

\begin{figure}[htbp]
\centering
\includegraphics[width=0.6\textwidth]{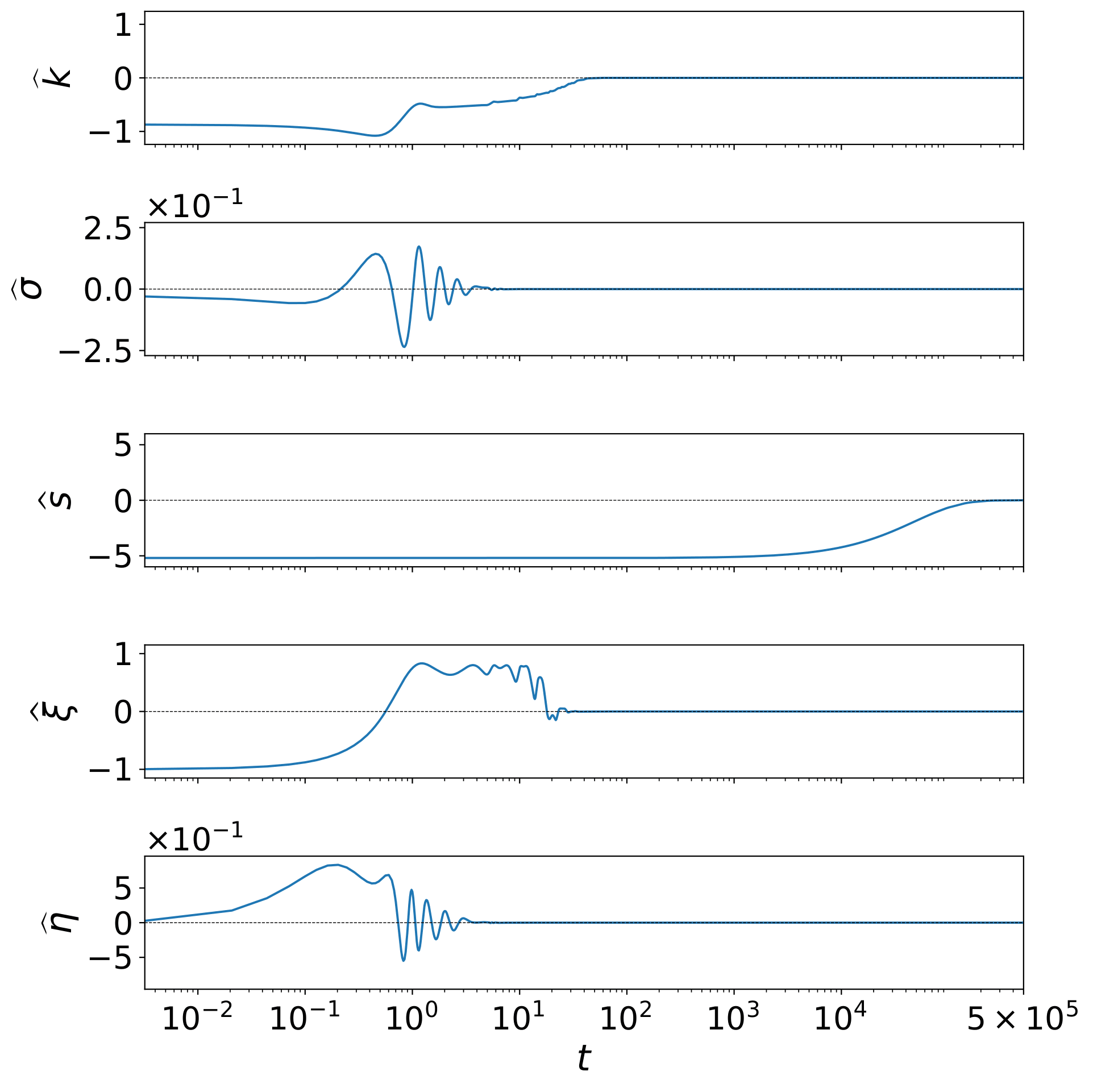}
\caption{Time evolution of the parameter-estimation errors
$\widehat{k}(t)$, $\widehat{\sigma}(t)$, $\widehat{s}(t)$,
$\widehat{\xi}(t)$, and $\widehat{\eta}(t)$ relative to the
nominal target values. Logarithmic time axes reveal the
distinct relaxation times, with $\widehat{s}(t)$ approaching
zero substantially more slowly than the other components.}
\label{fig:parameter_errors}
\end{figure}

We next examine the memristive contribution to the Lyapunov
derivative along the computed trajectory.
Figure~\ref{fig:dissipative_check} shows the previously defined
quantities $\Psi(t)$ and $\Theta(t)=\Psi(t)/e_4^2(t)$.
The normalized coefficient is evaluated only where $e_4(t)$
remains numerically resolved, since division by $e_4^2(t)$
becomes sensitive to numerical errors as $e_4(t)$ approaches zero.
The uniform dissipativity assumption of
Theorem~\ref{thm:adaptive_asymptotic} requires a fixed
$\mu>0$ such that
\[
\Psi(t)\le-\mu e_4^2(t)
\qquad\text{for all }t\ge0,
\]
or, equivalently, $\Theta(t)\le-\mu$ whenever $e_4(t)\ne0$.

For the chosen value $m=0.5$, monotonicity and the
$2$-Lipschitz continuity of $G$ imply
\[
\Theta(t)
=
-1+\frac{1}{2}
\frac{G(u_r(t))-G(u(t))}{u_r(t)-u(t)}
\in[-1,0],
\qquad e_4(t)\ne0.
\]
Thus, the memristive contribution is nonpositive throughout
the trajectory. In particular, $\Theta=-1$ when both
memristive states occupy the same outer branch
($\mathcal{C}_1$ or $\mathcal{C}_9$), whereas $\Theta=0$
when both occupy the central branch ($\mathcal{C}_5$).

As shown in Figure~\ref{fig:dissipative_check}, the trajectory
repeatedly enters $\mathcal{C}_5$ while $e_4$ remains nonzero
and numerically resolved. These visits preclude any fixed
$\mu>0$ satisfying the uniform dissipativity condition for
all $t\ge0$. Consequently,
Theorem~\ref{thm:adaptive_asymptotic} does not directly
establish asymptotic IOGS for this particular trajectory, despite the
observed decay of the synchronization errors.
Thus, the simulation provides numerical evidence of synchronization
beyond the \textit{sufficient} conditions of
Theorem~\ref{thm:adaptive_asymptotic}, rather than a numerical
corroboration of that theorem under its stated hypotheses.

The switching-independent estimate of
Theorem~\ref{thm:adaptive_practical_lyapunov} remains applicable
under its stated assumptions. It provides a Lyapunov decrease
condition outside the prescribed error neighborhood, but does
not by itself establish ultimate boundedness or asymptotic
convergence.  For the present parameter choice, the stronger sign property
$\Psi\le0$ implies $\dot{V}\le0$ and hence boundedness of the
synchronization and parameter-estimation errors. This
nonincrease of $V$, however, does not by itself imply
$e_4(t)\to0$. The observed decay of the memristive error
is therefore a numerical finding beyond the convergence
conclusions established by the present Lyapunov analysis,
and is compatible with the estimate of
Theorem~\ref{thm:adaptive_practical_lyapunov}.

\begin{figure}
\centering
\includegraphics[width=0.7\textwidth]{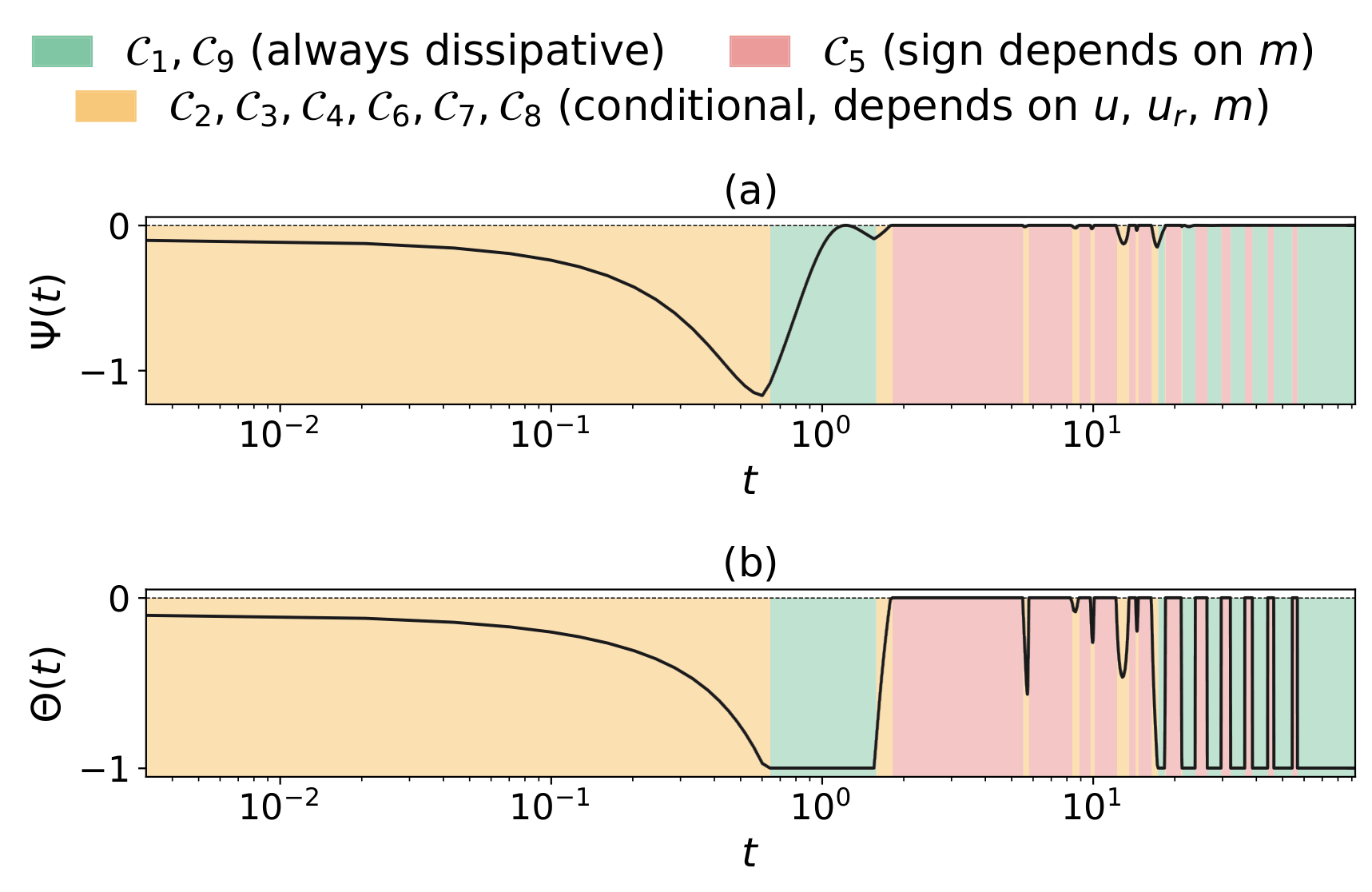}
\caption{Memristive dissipation along the computed trajectory
for $m=0.5$: (a) Lyapunov contribution $\Psi(t)$ and
(b) normalized coefficient $\Theta(t)$.
The displayed interval is restricted to times for which
$e_4(t)$ remains numerically resolved.
Background shading identifies the active switching case,
using the convention of Figure~\ref{fig:switching_regimes}.
Visits to the central region $\mathcal{C}_5$ yield
$\Theta=0$ for nonzero $e_4$, so the uniform strict
dissipativity condition is not satisfied, although
$\Psi\le0$ throughout.}
\label{fig:dissipative_check}
\end{figure}

\section{The IOGS problem: a data-driven approach}
\label{sec:data-driven}

In this section, we tackle the IOGS problem through the lens of machine learning. For a complex system, accurate governing equations may be unavailable, with only measurements of its activity—such as membrane potentials and ionic currents in neural systems—accessible. Data-driven approaches therefore offer a means of designing control schemes directly from observed dynamics.

In this paper, we address IOGS under the assumption that the full states of both systems are sampled at interval $\Delta t$ and that the control input enters the response additively, as in Eq. \eqref{eq:response_system}. From uncoupled state trajectories, Next Generation Reservoir Computing (NG-RC) \cite{Gauthier.2021} learns one-step predictors of the drive and response dynamics (Section \ref{subsec:ngrc}). We then replace the prescribed analytical map $\varphi$ in Eq. \eqref{eq:nonlinear_phi} with a parameterized map $\varphi_\theta$, learned through a self-supervised objective that promotes consistency between the two predictors without requiring paired drive--response observations (Section \ref{subsec:map}). The learned predictors and map define a sampled-data coupling law (Section \ref{subsec:coupling}). The governing equations are used only to generate the recordings and simulate the controlled dynamics; they are not supplied to the learning or control algorithms. The complete procedure, comprising offline learning of the predictors
and synchronization map followed by online sampled-data control,
is summarized in Algorithm~\ref{alg:data_driven_iogs} in the Appendix.

\subsection{Next generation reservoir computing}
\label{subsec:ngrc}

Our goal is to learn approximations $\hat{F}_d$ and $\hat{F}_r$ of the drive vector field $F_d$ in Eq. \eqref{eq:drive_system} and the uncontrolled response vector field $F_r$, defined by the right-hand side of Eq. \eqref{eq:response_system} with $\mathbf{v}=\mathbf{0}$, from sampled trajectories alone. Here, hats denote learned quantities rather than parameter errors as in Eq. \eqref{eq:param_errors}. To this end, we use NG-RC \cite{Gauthier.2021}, which replaces the randomly connected recurrent reservoir of an echo state network \cite{Jaeger.2001} with an explicit feature vector comprising state variables, optional time delays, and their monomials (Figure \ref{fig:ngrc}).

\begin{figure}
\centering
\includegraphics[width=0.85\textwidth]{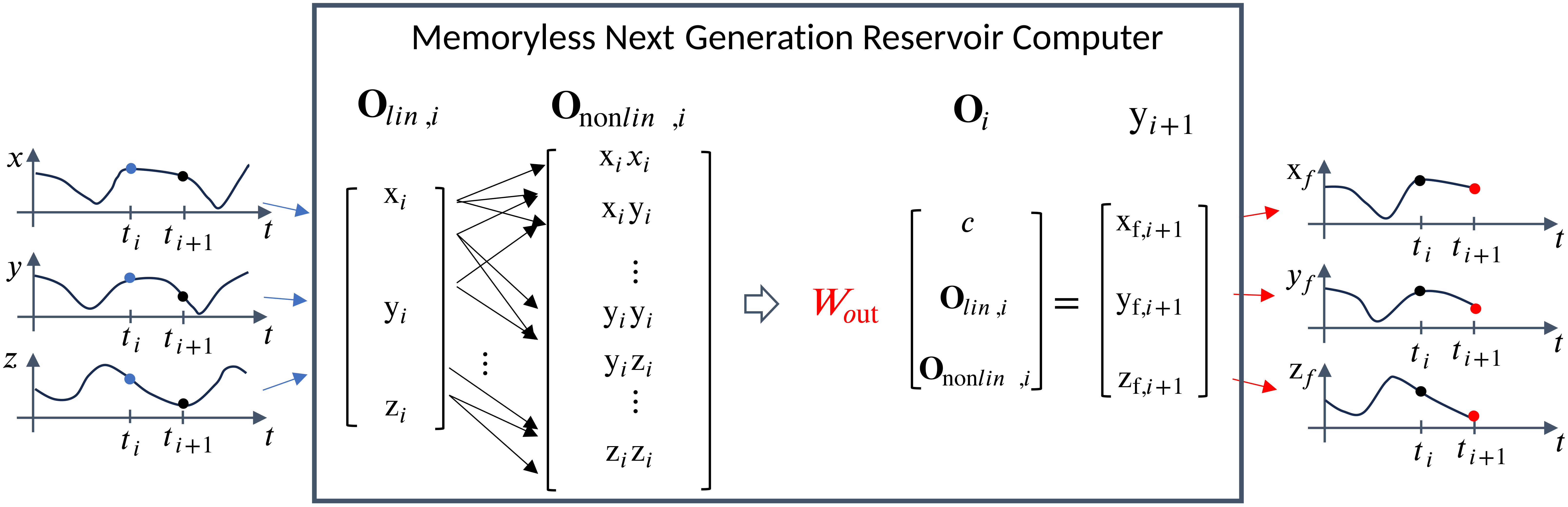}
\caption{Schematic of memoryless NG-RC, adapted from \cite{Gauthier.2021}. The current state is represented by a feature vector comprising a constant, linear state components, and nonlinear monomials. A readout matrix $W_{\mathrm{out}}$, fitted by ridge regression, maps these features to estimated state-increment rates, which determine the one-step prediction. No delayed states or recurrent reservoir are used.}
\label{fig:ngrc}
\end{figure}

Training and assessment data are generated by simulating the two systems independently, with the response parameters fixed at the nominal values specified in Section \ref{subsec:parameters}, namely $(\bar{k},\bar{\sigma},\bar{s},\bar{\xi},\bar{\eta})=(k,\sigma,s,\xi,\eta)$. Trajectories are sampled at interval $\Delta t=0.02$. We write $\mathbf{X}(t)\in\mathbb{R}^n$ for either sampled state, with $n=4$ for the drive and $n=5$ for the response. Recordings from distinct initial conditions are partitioned into disjoint training, evaluation, and test sets, assigning each recording entirely to one set. The training set is used to fit the readout matrix $\mathbf{W}_{\mathrm{out}}$, the evaluation set to select hyperparameters, and the held-out test set to assess generalization to unseen trajectories. States are standardized as $\tilde{\mathbf{X}}(t)=\mathbf{D}^{-1}(\mathbf{X}(t)-\bar{\mathbf{X}})$, where $\bar{\mathbf{X}}$ contains the componentwise training means and $\mathbf{D}=\operatorname{diag}(d_1,\ldots,d_n)$ contains the corresponding standard deviations, assumed positive. These training statistics are also used to standardize the evaluation and test data.

Unlike \cite{Gauthier.2021}, and following \cite{Kent.2024}, we use a memoryless formulation. For the deterministic systems considered here, the full state determines the subsequent evolution, so delay coordinates are not required. This also permits evaluating $\hat{F}_r$ at mapped drive states (Section \ref{subsec:map}) and controlled response states (Section \ref{subsec:coupling}), for which an uncontrolled response history is unavailable. The nonlinear features $\mathbf{O}_{\mathrm{nonlin}}(t)$ comprise all distinct monomials of total degree $2$ through $q$ in the standardized state components, giving
\begin{equation}
\mathbf{O}(t)=\big[1;\ \tilde{\mathbf{X}}(t);\ \mathbf{O}_{\mathrm{nonlin}}(t)\big].
\label{eq:ngrc_features}
\end{equation}

Following \cite{Gauthier.2021}, we train the readout on consecutive state increments, scaled here to standardized finite-difference rates $\mathbf{Y}(t)=\big(\tilde{\mathbf{X}}(t+\Delta t)-\tilde{\mathbf{X}}(t)\big)/\Delta t$. Stacking features and targets column-wise over the training recordings yields the matrices $\mathbf{O}$ and $\mathbf{Y}$. Ridge regression then gives
\begin{equation}
\mathbf{W}_{\mathrm{out}}=\mathbf{Y}\mathbf{O}^\top\big(\mathbf{O}\mathbf{O}^\top+\nu\mathbf{I}\big)^{-1},
\label{eq:ngrc_ridge}
\end{equation}
where $\nu>0$ is the Tikhonov regularization parameter.

In the original coordinates, $\hat{F}(\mathbf{X}):=\mathbf{D}\mathbf{W}_{\mathrm{out}}\mathbf{O}(\mathbf{X})$ approximates the finite-time increment rate $\big(\Phi_{\Delta t}(\mathbf{X})-\mathbf{X}\big)/\Delta t$, where $\Phi_{\Delta t}$ denotes the uncontrolled flow map and $\mathbf{O}(\mathbf{X})$ is constructed from the standardized state. For locally Lipschitz $F$, this rate equals $F(\mathbf{X})+\mathcal{O}(\Delta t)$ as $\Delta t\to0$, distinguishing finite-step bias from regression error. The corresponding one-step predictor is
\begin{equation}
\hat{\mathbf{X}}(t+\Delta t)=\mathbf{X}(t)+\Delta t\,\hat{F}(\mathbf{X}(t)).
\label{eq:ngrc_step}
\end{equation}
Fitting Eq. \eqref{eq:ngrc_ridge} separately to the drive and response recordings yields $\hat{F}_d$ and $\hat{F}_r$. Autonomous forecasts are obtained by iterating Eq. \eqref{eq:ngrc_step}.

The hyperparameters $(q,\nu)$ are selected separately for each system by Bayesian optimization using a Gaussian-process surrogate and the expected-improvement acquisition function \cite{Snoek.2012}, implemented in \texttt{scikit-optimize} \cite{Head.2021}.
Each search comprises 30 evaluations, including 8 initial random trials. Since map learning and control use one-step predictions, we optimize short-horizon accuracy rather than long-term autonomous forecasting. The objective is the standardized RMSE over $N_s$ autonomous rollouts, each initialized at a measured evaluation state $\mathbf{X}(t_s)$ and advanced for $N_p$ steps:
\begin{equation}
\mathrm{RMSE}=\sqrt{\frac{1}{nN_sN_p}\sum_{s=1}^{N_s}\sum_{j=1}^{N_p}\big\|\hat{\tilde{\mathbf{X}}}(t_s+j\Delta t)-\tilde{\mathbf{X}}(t_s+j\Delta t)\big\|_2^2},
\label{eq:ngrc_rmse}
\end{equation}
where $N_p=50$ spans one time unit and $N_s=2000$, with 1000 initial times sampled randomly from each of the two evaluation recordings.

We restrict $q\leq3$ to obtain a compact feature basis that contains all polynomial terms of the HR vector fields. The switching term $G$ is not represented exactly by this basis; moreover, the finite-time increment rates need not be cubic even for a cubic vector field. We search $\nu\in[10^{-10},1]$ on a logarithmic scale. Both searches select $q=3$, yielding 35 drive features and 56 response features, including the constant term. On the test recordings, Eq. \eqref{eq:ngrc_rmse} gives RMSEs of $3.7\times10^{-3}$ and $3.3\times10^{-3}$ for the drive and response predictors, respectively. Figure \ref{fig:ngrc_results} compares longer autonomous forecasts with the corresponding test trajectories.
\begin{figure}
\centering
\includegraphics[width=0.9\textwidth]{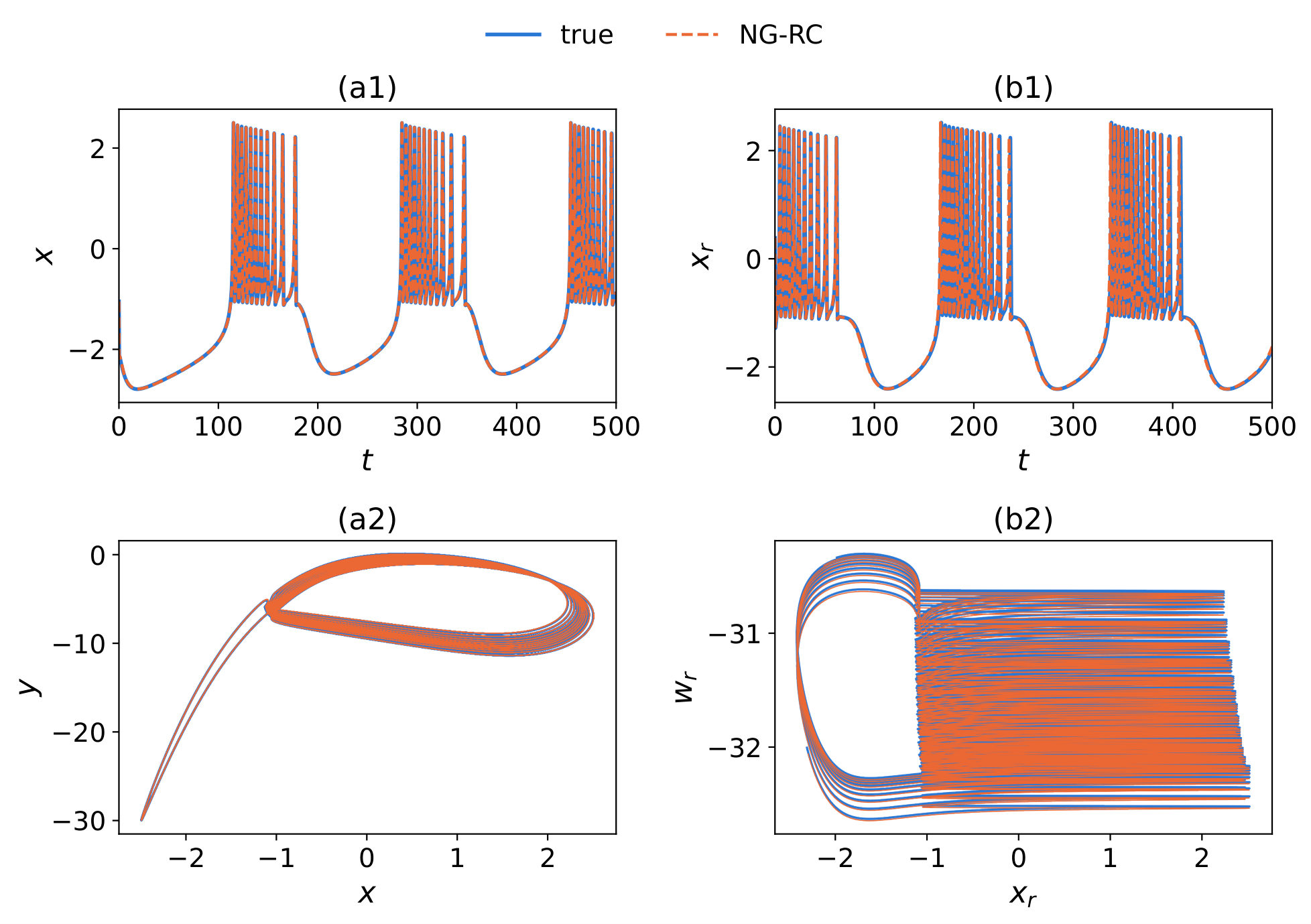}
\caption{Autonomous NG-RC forecasts compared with test trajectories for (a) the drive and (b) the response: (a1, b1) membrane potentials; (a2, b2) phase-space projections for $t\geq300$.}
\label{fig:ngrc_results}
\end{figure}

\subsection{Self-supervised learning of the synchronization map}
\label{subsec:map}

We seek a map $\varphi_\theta:\mathbb{R}^4\to\mathbb{R}^5$ whose graph is approximately invariant under the learned one-step dynamics, with the aim of reducing the control required to maintain $\mathbf{X}_r=\varphi_\theta(\mathbf{X}_d)$. The uncoupled recordings provide no paired drive--response targets for supervised regression. We therefore learn $\varphi_\theta$ through a self-supervised consistency objective: advancing a drive state with its learned predictor and then mapping should approximately agree with mapping first and advancing with the learned response predictor.

We parameterize $\varphi_\theta$ as an affine map in the standardized coordinates of Section \ref{subsec:ngrc}. Let $\bar{\mathbf{X}}_d,\bar{\mathbf{X}}_r$ be the training means and $\mathbf{D}_d,\mathbf{D}_r$ the diagonal matrices of training standard deviations. With $\tilde{\mathbf{X}}_d=\mathbf{D}_d^{-1}(\mathbf{X}_d-\bar{\mathbf{X}}_d)$, we define
\begin{equation}
\varphi_\theta(\mathbf{X}_d)=\bar{\mathbf{X}}_r+\mathbf{D}_r\big(\mathbf{A}\tilde{\mathbf{X}}_d+\mathbf{b}\big),
\label{eq:learned_map}
\end{equation}
where $\theta=(\mathbf{A},\mathbf{b})$, $\mathbf{A}\in\mathbb{R}^{5\times4}$, and $\mathbf{b}\in\mathbb{R}^5$, giving 25 trainable parameters. This family includes the reference correspondence $\varphi_{\theta_0}(\mathbf{X}_d)=(x,y,z,u,\bar{w}_r)^\top$, which matches the four shared variables and fixes the fifth at its response training mean $\bar{w}_r$. Standardization makes the entries of $\mathbf{A}$ dimensionless and comparable across components. Unlike the nonlinear map in Eq. \eqref{eq:nonlinear_phi}, this family is affine and does not impose $\varphi_{\theta,4}=u$; hence, it does not preserve the specific memristive error structure used in Section \ref{subsec:casewise}.

We fit $\theta$ using $N$ drive states $\mathbf{X}_{d,n}$ sampled from the drive training recordings. The two compositions of the map and the one-step predictors in Eq. \eqref{eq:ngrc_step} give
\begin{equation}
\mathbf{a}_n(\theta)=\varphi_\theta\big(\mathbf{X}_{d,n}+\Delta t\,\hat{F}_d(\mathbf{X}_{d,n})\big),
\qquad
\mathbf{b}_n(\theta)=\varphi_\theta(\mathbf{X}_{d,n})+\Delta t\,\hat{F}_r\big(\varphi_\theta(\mathbf{X}_{d,n})\big),
\label{eq:two_paths}
\end{equation}
where $\mathbf{a}_n$ advances then maps, whereas $\mathbf{b}_n$ maps then advances. Their agreement supplies a self-supervised training signal without paired response targets. We define the dynamics-consistency loss as
\begin{equation}
L_{\mathrm{dyn}}(\theta)=\frac{1}{N}\sum_{n=1}^{N}\left\|\frac{\mathbf{a}_n(\theta)-\mathbf{b}_n(\theta)}{\Delta t}\right\|_2^2.
\label{eq:L_dyn}
\end{equation}

The consistency loss alone admits collapsed solutions: any constant map $\varphi_\theta\equiv\mathbf{c}$ satisfying $\hat{F}_r(\mathbf{c})=\mathbf{0}$ yields $L_{\mathrm{dyn}}=0$ and ignores the drive. Such a solution targets an equilibrium of the learned response dynamics rather than a varying response to the drive. To discourage collapse, we introduce variance and covariance penalties inspired by VICReg \cite{Bardes.2022}, applied to the first four standardized target components. Let $\mathbf{A}_{1:4}$ contain the first four rows of $\mathbf{A}$, $\mathbf{C}_d$ be the sample covariance of the standardized training inputs $\tilde{\mathbf{X}}_{d,n}$, and $\mathbf{R}$ be the sample covariance of the first four standardized response components in the training recordings. The corresponding target covariance is $\mathbf{S}(\theta)=\mathbf{A}_{1:4}\mathbf{C}_d\mathbf{A}_{1:4}^\top$. We define
\begin{equation}
L_{\mathrm{var}}(\theta)=\sum_{i=1}^{4}\max\Big(0,\,1-\sqrt{S_{ii}(\theta)}\Big)^2,
\qquad
L_{\mathrm{cov}}(\theta)=\sum_{\substack{i,j=1\\i\neq j}}^{4}\big(S_{ij}(\theta)-R_{ij}\big)^2.
\label{eq:collapse_terms}
\end{equation}

The variance penalty discourages target standard deviations below the unit scale of the standardized response data, while the covariance penalty discourages discrepancies in their off-diagonal covariances. Unlike VICReg, which penalizes off-diagonal covariances toward zero, we retain the dependence structure observed in the response recordings. These soft penalties do not prescribe a coordinate correspondence or guarantee a nondegenerate map. The fifth target is exempt from both penalties, allowing its variability to adjust to the slow $w_r$ dynamics without imposing the variance of the uncontrolled response. Its dynamical compatibility remains enforced through $L_{\mathrm{dyn}}$. The total training objective is
\begin{equation}
L(\theta)=L_{\mathrm{dyn}}(\theta)+\lambda_{\mathrm{var}}L_{\mathrm{var}}(\theta)+\lambda_{\mathrm{cov}}L_{\mathrm{cov}}(\theta),
\label{eq:map_loss}
\end{equation}
with $\lambda_{\mathrm{var}}=\lambda_{\mathrm{cov}}=1$.

We minimize Eq. \eqref{eq:map_loss} using $N=14400$ drive states, with 1800 sampled randomly from each of the eight training recordings. Optimization uses L-BFGS-B \cite{Byrd.1995}, implemented in \texttt{SciPy} \cite{Virtanen.2020}, without parameter bounds. The initialization $\mathbf{A}=\mathbf{0}$, $\mathbf{b}=\mathbf{0}$ maps every drive state to the response training mean $\bar{\mathbf{X}}_r$, imposing no coordinate correspondence. In the implementation, $\sqrt{S_{ii}(\theta)}$ in Eq. \eqref{eq:collapse_terms} is evaluated as $\sqrt{S_{ii}(\theta)+\epsilon}$ with $\epsilon=10^{-12}$, which makes $L$ differentiable at the initialization, and its exact gradient is supplied to the optimizer. The stopping tolerances are $10^{-10}$ for the relative objective reduction and $10^{-7}$ for the infinity norm of the supplied gradient.

The objective decreases from $75.6$ to $8.3\times10^{-3}$ over 763 iterations (Figure \ref{fig:sync_map_training}). At the final iterate, $L_{\mathrm{dyn}}=4.2\times10^{-3}$ on the training states and $3.2\times10^{-3}$ on the evaluation states. As an independent consistency check, we replace both learned one-step predictors in Eq. \eqref{eq:two_paths} with fourth-order Runge--Kutta (RK4) steps of the uncoupled governing equations. Evaluated on the same training states, this diagnostic gives $L_{\mathrm{dyn}}=3.3\times10^{-3}$ at the learned parameters (dashed curve in Figure \ref{fig:sync_map_training}). This comparable residual supports consistency with the underlying dynamics on the sampled states, although it does not exclude predictor errors elsewhere.

Without a prescribed coordinate correspondence, the learned linear part approximates that of the reference map $\varphi_{\theta_0}$. The matrix $\mathbf{A}_{1:4}$ is nearly diagonal: its diagonal entries lie in $[0.98,1.01]$ and differ from the corresponding reference entries by at most $0.04$, while its off-diagonal entries have magnitude below $0.1$. Its singular values lie in $[0.95,1.05]$, indicating a well-conditioned mapping of the four standardized drive coordinates. The offsets differ from the reference values, most notably in the $z_r$ component, whose target exceeds $z$ by $0.63$ at the mean drive state. The fifth row is dominated by a coefficient of $0.027$ on $\tilde{z}$, yielding a weakly varying $w_r$ target consistent with its slow dynamics.

\begin{figure}
\centering
\includegraphics[width=0.6\textwidth]{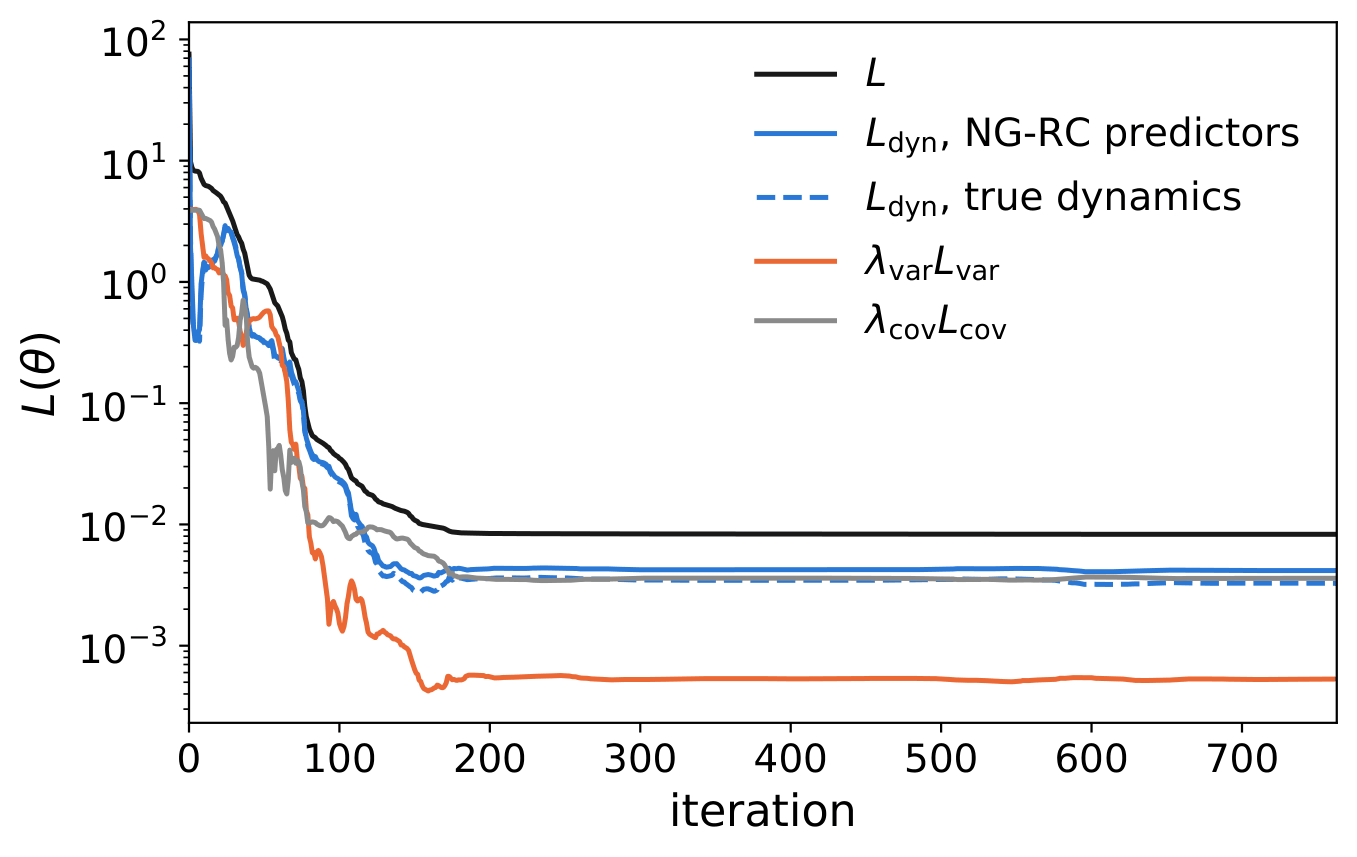}
\caption{Training objective $L$ in Eq. \eqref{eq:map_loss} and its weighted components during L-BFGS-B optimization. The dashed curve shows $L_{\mathrm{dyn}}$ evaluated using RK4 steps of the governing equations.}
\label{fig:sync_map_training}
\end{figure}

\subsection{Data-driven coupling law}
\label{subsec:coupling}

For the learned map, we define the synchronization error analogously to Eq. \eqref{eq:error_definition} as $\mathbf{e}_\theta(t)=\mathbf{X}_r(t)-\varphi_\theta(\mathbf{X}_d(t))$. The response is actuated through all five channels of Eq. \eqref{eq:response_system}. Inputs are updated at sampling times and applied through a zero-order hold: $\mathbf{v}(t')=\mathbf{v}(t)$ for $t'\in[t,t+\Delta t)$.

We construct a sampled-data analogue of Eq. \eqref{eq:controllers_explicit} using learned predictors, following the NG-RC control approach of \cite{Kent.2024} and the discrete-time neural-network control framework of \cite{Sarangapani.2006}. Let $\hat{\mathbf{X}}_d(t+\Delta t)=\mathbf{X}_d(t)+\Delta t\,\hat{F}_d(\mathbf{X}_d(t))$ and $\hat{\mathbf{X}}_r^{\mathrm{free}}(t+\Delta t)=\mathbf{X}_r(t)+\Delta t\,\hat{F}_r(\mathbf{X}_r(t))$ denote the uncoupled one-step predictions. Approximating the effect of the held input to first order gives the controlled predictor $\hat{\mathbf{X}}_r^{\mathrm{free}}(t+\Delta t)+\Delta t\,\mathbf{v}(t)$. Requiring its synchronization error to equal $\zeta\,\mathbf{e}_\theta(t)$, where $\zeta=e^{-\kappa\Delta t}\in(0,1)$ and $\kappa>0$, yields
\begin{equation}
\mathbf{v}(t)=\frac{\varphi_\theta\big(\hat{\mathbf{X}}_d(t+\Delta t)\big)-\hat{\mathbf{X}}_r^{\mathrm{free}}(t+\Delta t)+\zeta\,\mathbf{e}_\theta(t)}{\Delta t}.
\label{eq:coupling_law}
\end{equation}
This contraction is imposed on the predictor; prediction errors and the approximate treatment of the held input prevent it from being an exact identity for the physical systems.

Substituting $\mathbf{X}_r(t)=\varphi_\theta(\mathbf{X}_d(t))+\mathbf{e}_\theta(t)$ into Eq. \eqref{eq:coupling_law} gives
\begin{equation}
\mathbf{v}(t)=\frac{\varphi_\theta\big(\hat{\mathbf{X}}_d(t+\Delta t)\big)-\varphi_\theta(\mathbf{X}_d(t))}{\Delta t}-\hat{F}_r(\mathbf{X}_r(t))-\frac{1-\zeta}{\Delta t}\,\mathbf{e}_\theta(t).
\label{eq:coupling_law_split}
\end{equation}
The three terms provide target-rate prediction, response-drift compensation, and proportional error feedback, respectively, paralleling Eq. \eqref{eq:controllers_explicit}. The effective feedback gain $(1-\zeta)/\Delta t$ tends to $\kappa$ as $\Delta t\to0$. Unlike the analytical controller, this law treats the memristive channel identically to the others: its dynamics are approximately compensated through the learned predictors rather than retaining an explicit switching difference $G(u_r)-G(u)$. The response parameters remain fixed, and no adaptive parameter laws are used.

For numerical assessment, the governing equations represent the physical drive and response, while the controller accesses only sampled states. Unlike \cite{Kobiolka.2025}, where synchronization is imposed between two reservoir observers, the input acts directly on the response system. Both systems evolve uncoupled until $t=100$, when control is activated with $\kappa=1$. Between sampling times, we integrate both systems using RK4 with four substeps of size $\Delta t/4$, holding the response input constant. Figure \ref{fig:ml_closed_loop} shows a representative trajectory: all five response components approach their targets within a few time units and subsequently track the mapped drive dynamics with small residual errors.

The errors in Figure \ref{fig:ml_sync_errors} initially decay approximately at the prescribed rate $e^{-\kappa(t-100)}$. From $t\approx105$, the error norm fluctuates between approximately $10^{-3}$ and $10^{-2}$. We quantify tracking accuracy by the time average of $\|\mathbf{D}_r^{-1}\mathbf{e}_\theta(t)\|_2/\sqrt{5}$ and control effort by the temporal RMS of $\|\mathbf{D}_r^{-1}\mathbf{v}(t)\|_2$, both evaluated over the final fifty time units. Across eight initial-condition pairs, the median tracking error and control effort are $1.4\times10^{-3}$ and $2.1\times10^{-2}$, respectively. Using the reference map $\varphi_{\theta_0}$ increases these values to $4.8\times10^{-3}$ and $1.7\times10^{-1}$. Thus, the learned map reduces median control effort by a factor of approximately eight while improving tracking accuracy. Replacing both learned predictors with RK4 steps of the governing equations reduces the median error to $2.3\times10^{-4}$, indicating a substantial contribution from predictor inaccuracies. The remaining discrepancy is consistent with the approximate sampled-data control model and numerical integration; exact synchronization is not established by these experiments.

\begin{figure}
\centering
\includegraphics[width=0.6\textwidth]{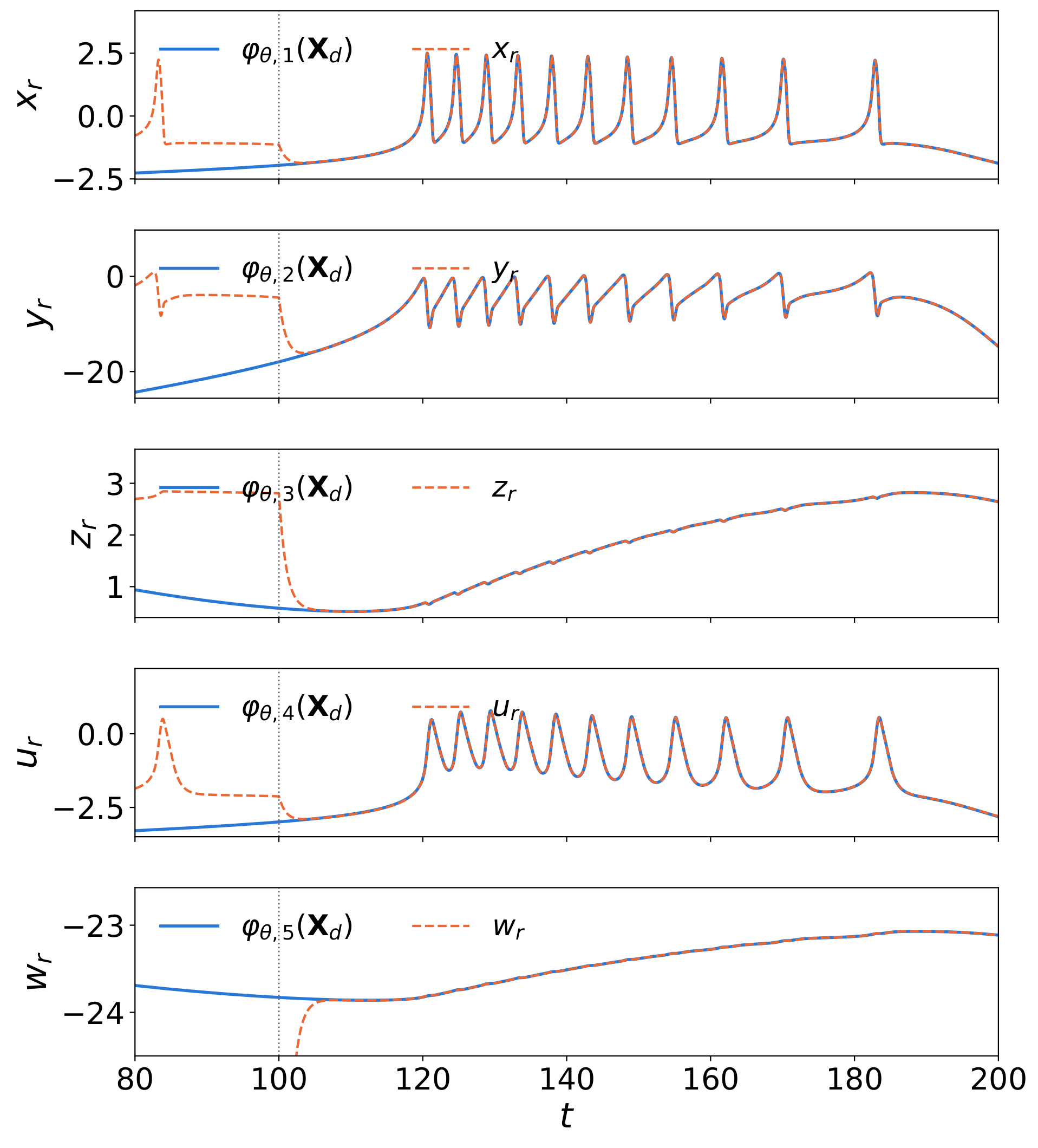}
\caption{Response components and learned targets $\varphi_{\theta,i}(\mathbf{X}_d)$ under data-driven control. The black vertical dotted line marks control activation at $t=100$.}
\label{fig:ml_closed_loop}
\end{figure}

\begin{figure}
\centering
\includegraphics[width=0.6\textwidth]{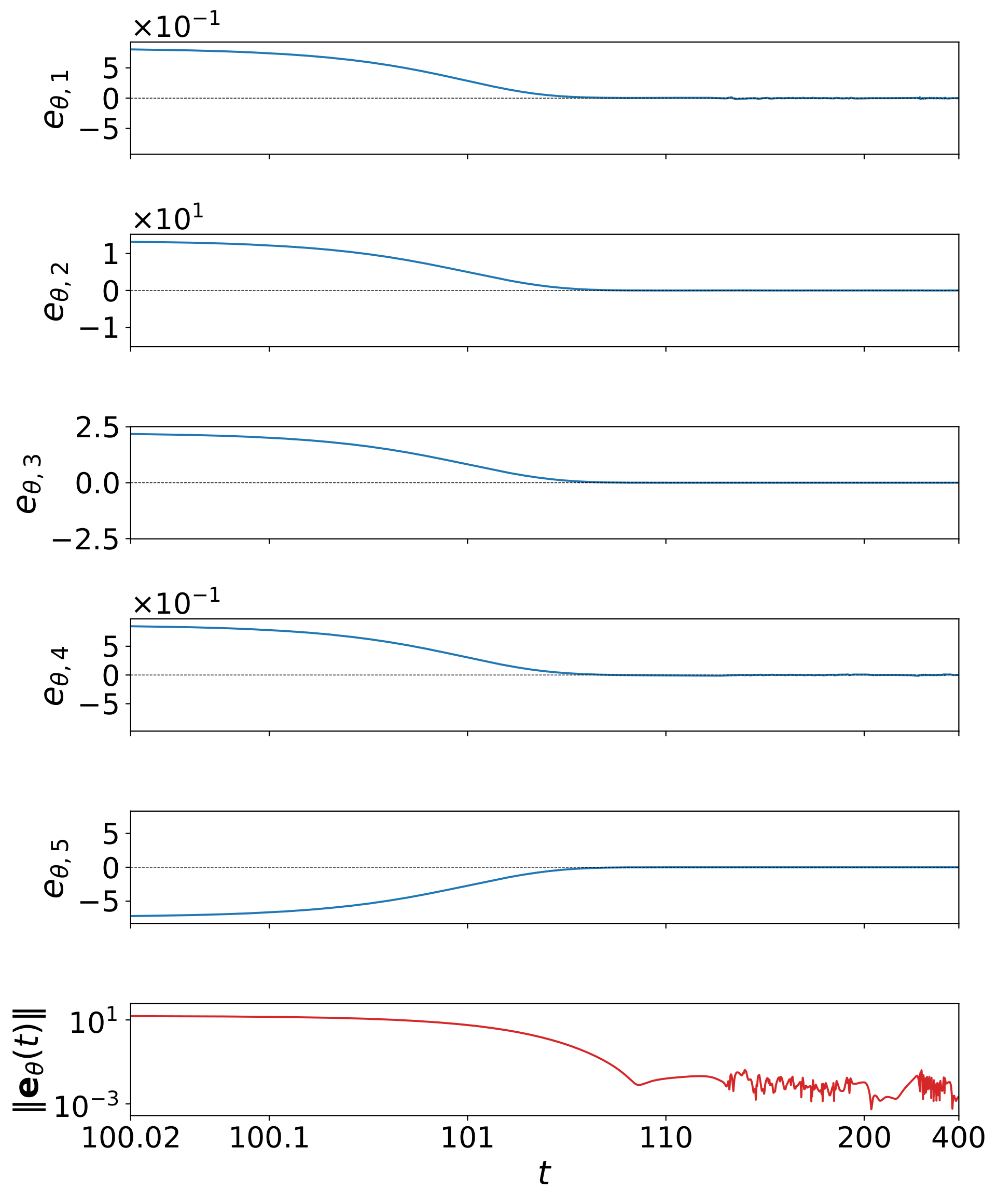}
\caption{Synchronization error components and their Euclidean norm under data-driven control, shown after activation at $t=100$ on logarithmic time axes.}
\label{fig:ml_sync_errors}
\end{figure}

\section{Summary and conclusion}
\label{sec:conclusion}
We have investigated increased-order generalized synchronization (IOGS) between a four-dimensional memristive Hindmarsh--Rose drive and a five-dimensional response containing an additional slow adaptation variable. The dimensional mismatch makes the specification of a synchronization relationship an integral part of the control problem: the target map must assign a dynamical correspondence to every response component, including the variable without a counterpart in the drive. We addressed this problem through two complementary constructions, one based on a prescribed nonlinear map and adaptive Lyapunov design, and the other on an affine map learned from the consistency of data-driven predictors. These constructions connect the choice of synchronization manifold to two distinct objectives: retaining an error structure amenable to stability analysis and reducing the dynamical mismatch that the control must compensate.

In the analytical formulation, the constraint $\varphi_4=u$ preserves the memristive switching difference explicitly in the closed-loop error dynamics. The adaptive laws cancel the parameter-error cross terms in the derivative of a quadratic Lyapunov function, while a case-wise analysis characterizes the remaining memristive contribution. Under the bounded-drive and global-existence assumptions of \cref{thm:adaptive_asymptotic}, uniform strict dissipativity of this contribution yields asymptotic convergence of the synchronization error and boundedness of the parameter-estimation errors. Without the strict dissipativity condition, \cref{thm:adaptive_practical_lyapunov} provides a switching-independent estimate of the form $\dot V\leq-\alpha\|e\|^2+C$, identifying an explicit radius $\delta=\sqrt{C/\alpha}$ outside which $V$ decreases. This radius characterizes a Lyapunov decrease threshold; the estimate does not, by itself, establish ultimate boundedness or practical synchronization in the asymptotic sense.

The numerical results delineate the distinction between these sufficient conditions and the observed dynamics. For the investigated trajectory at $m=0.5$, the memristive contribution is nonpositive but vanishes during visits to the central switching region $\mathcal C_5$, even when the corresponding error is nonzero. The uniform strict dissipativity hypothesis is therefore not satisfied, although all synchronization-error components decay toward zero over the simulated interval. The simulations thus indicate convergence beyond the conditions covered by the asymptotic theorem. Accurate recovery of the nominal parameters is also observed, but a general parameter-identification result would require additional excitation conditions that have not been established here.

The data-driven formulation addresses the same dimensional mismatch without supplying the governing equations to the learning or control algorithms. Memoryless next generation reservoir computing predictors are trained on uncoupled full-state recordings, and an affine synchronization map is learned by promoting agreement between mapping and one-step dynamical evolution. Variance and covariance penalties discourage collapsed target relationships without prescribing a coordinate correspondence. The resulting map approximately recovers the linear part of the reference correspondence while adjusting the offsets and the target associated with the additional response variable. A sampled-data feedback law constructed from the learned predictors then drives the response toward this target relationship, with the response parameters held fixed.

Across the eight initial-condition pairs considered, the learned map reduces the median control effort by approximately a factor of eight relative to the reference map within the same data-driven framework, while also reducing the median tracking error. These findings support the use of dynamical consistency as a criterion for selecting synchronization targets that require less corrective forcing. The remaining tracking error reflects limitations of both the learned predictors and the approximate sampled-data control model. Replacing the predictors with numerical flow maps obtained from the governing equations substantially reduces the residual, demonstrating a significant contribution from prediction error. The observed tracking performance nevertheless remains a numerical result rather than a proof of asymptotic synchronization or closed-loop stability.

The two approaches establish complementary roles for synchronization-map design. The analytical construction retains the memristive error structure and yields conditional convergence guarantees, whereas the learned construction seeks compatibility with the uncoupled dynamics and demonstrates reduced control effort. Their assumptions and target maps differ, so the numerical results do not constitute a direct performance comparison between the analytical and data-driven controllers. In particular, the data-driven implementation assumes full-state measurements and additive actuation of all response components, and its unconstrained affine map does not enforce the memristive-coordinate identity used in the analytical proof.

Several mathematical questions follow naturally from these results. A convergence analysis based on dissipation accumulated along trajectories could potentially weaken the uniform strict dissipativity requirement and account for the observed synchronization during intermittent visits to $\mathcal C_5$. For the data-driven scheme, stability estimates incorporating predictor errors and finite sampling would clarify how model accuracy and sampling rate determine the attainable tracking precision. Combining learned synchronization maps with explicit structural constraints offers a further direction for relating dynamical compatibility to rigorous convergence conditions. Extensions to partial observations and restricted actuation would subsequently help determine how these constructions perform under more limited access to the neuronal states.

\section*{Acknowledgements} This work was funded by the Department of Data Science (DDS), Friedrich-Alexander-Universit\"at Erlangen-Nürnberg, Germany, and the Deutsche Forschungsgemeinschaft (DFG, German Research Foundation) via the grant YA 764/1-1 to M.E.Y—Project No. 456989199. 

\section*{Data availability}
The code supporting the findings of this study is publicly available at \url{https://github.com/dennismstfc/iogs-analytical-vs-ml}. All data and figures can be regenerated with the scripts provided there.

\section*{Appendix: Offline learning and online control for IOGS}
\label{appendix}
Algorithm~\ref{alg:data_driven_iogs} summarizes the offline learning
and online control stages described in
Sections~\ref{subsec:ngrc}--\ref{subsec:coupling}.
Entire recordings are assigned to disjoint training, evaluation,
and test sets; test data are reserved for final assessment.
Each predictor search uses 30 trials, including 8 initial random
proposals. Map optimization uses L-BFGS-B without parameter bounds,
with the exact gradient, relative objective-reduction tolerance
$10^{-10}$, and gradient infinity-norm tolerance $10^{-7}$.
The variance penalty is smoothed as specified in
Section~\ref{subsec:map}.
Control is activated at $t_{\mathrm{on}}=100$ in the reported
experiments.

\begin{algorithm}
\caption{Offline learning and online sampled-data control for IOGS}
\label{alg:data_driven_iogs}
\DontPrintSemicolon

\KwInput{Uncoupled training and evaluation recordings sampled at
$\Delta t$; search ranges for $(q,\nu)$; map sample size $N$;
weights $\lambda_{\mathrm{var}},\lambda_{\mathrm{cov}}$;
feedback rate $\kappa>0$; activation time $t_{\mathrm{on}}$.}
\KwOutput{One-step predictors $P_d,P_r$; affine synchronization
map $\varphi_\theta$; applied control sequence $\mathbf{v}(t)$.}

\BlankLine
\tcp{Stage 1: learn the uncoupled predictors}
\ForEach{$s\in\{d,r\}$}{
  Compute training mean $\bar{\mathbf{X}}_s$ and scaling
  $\mathbf{D}_s$; standardize training and evaluation data
  using these statistics\;
  \For{$j\leftarrow1$ \KwTo $30$}{
    Propose $(q_j,\nu_j)$ randomly for $j\le8$,
    otherwise by Bayesian optimization\;
    Construct features and increment-rate targets;
    fit $\mathbf{W}_j$ using Eqs.
    \eqref{eq:ngrc_features} and \eqref{eq:ngrc_ridge}\;
    Evaluate rollout error $J_j$ on evaluation recordings
    using Eq. \eqref{eq:ngrc_rmse}\;
  }
  Retain $(q_j,\mathbf{W}_j)$ with the smallest $J_j$
  and define the corresponding increment-rate model $\hat F_s$\;
  $P_s(\mathbf{X})\leftarrow
  \mathbf{X}+\Delta t\,\hat F_s(\mathbf{X})$\;
}

\BlankLine
\tcp{Stage 2: learn the synchronization map}
Sample $N$ drive training states; compute their standardized
covariance $\mathbf{C}_d$ and the covariance $\mathbf{R}$
of the first four standardized response training components\;
Initialize $\theta=(\mathbf{A},\mathbf{b})
\leftarrow(\mathbf{0},\mathbf{0})$\;
With $P_d$ and $P_r$ fixed, optimize $L(\theta)$ in
Eq.~\eqref{eq:map_loss} by L-BFGS-B, using the two-path
residuals in Eq.~\eqref{eq:two_paths} and the stated
stopping criteria\;

\BlankLine
\tcp{Stage 3: control the response online}
$\zeta\leftarrow\exp(-\kappa\Delta t)$\;
\ForEach{sampling time $t$}{
  $\mathbf{v}(t)\leftarrow\mathbf{0}$\;
  \If{$t\ge t_{\mathrm{on}}$}{
    Measure $\mathbf{X}_d(t)$ and $\mathbf{X}_r(t)$\;
    $\mathbf{e}\leftarrow
    \mathbf{X}_r(t)-\varphi_\theta(\mathbf{X}_d(t))$\;
    $\mathbf{v}(t)\leftarrow
    \bigl[
      \varphi_\theta(P_d(\mathbf{X}_d(t)))
      -P_r(\mathbf{X}_r(t))
      +\zeta\mathbf{e}
    \bigr]/\Delta t$\;
  }
  Hold $\mathbf{v}(t)$ constant on $[t,t+\Delta t)$
  and apply it additively to the response\;
}
\end{algorithm}

\newpage

\end{document}